\begin{document}

\theoremstyle{definition}
\newtheorem{theorem}{Theorem}
\newtheorem{definition}[theorem]{Definition}
\newtheorem{problem}[theorem]{Problem}
\newtheorem{assumption}[theorem]{Assumption}
\newtheorem{corollary}[theorem]{Corollary}
\newtheorem{proposition}[theorem]{Proposition}
\newtheorem{example}[theorem]{Example}
\newtheorem{lemma}[theorem]{Lemma}
\newtheorem{observation}[theorem]{Observation}
\newtheorem{fact}[theorem]{Fact}
\newtheorem{question}[theorem]{Open Question}
\newtheorem{conjecture}[theorem]{Conjecture}
\newtheorem{addendum}[theorem]{Addendum}
\newcommand{\uint}{{[0, 1]}}
\newcommand{\Cantor}{{\{0,1\}^\mathbb{N}}}
\newcommand{\name}[1]{\textsc{#1}}
\newcommand{\id}{\textrm{id}}
\newcommand{\lpo}{\textrm{LPO}}
\newcommand{\llpo}{\textrm{LLPO}}
\newcommand{\dom}{\operatorname{dom}}
\newcommand{\Dom}{\operatorname{Dom}}
\newcommand{\codom}{\operatorname{CDom}}
\newcommand{\Baire}{{\mathbb{N}^\mathbb{N}}}
\newcommand{\hide}[1]{}
\newcommand{\mto}{\rightrightarrows}
\newcommand{\Sierp}{Sierpi\'nski }
\newcommand{\BC}{\mathcal{B}}
\newcommand{\C}{\textrm{C}}
\newcommand{\aouc}{\textrm{AoUC}_{\uint}}
\newcommand\tboldsymbol[1]{%
\protect\raisebox{0pt}[0pt][0pt]{%
$\underset{\widetilde{}}{\boldsymbol{#1}}$}\mbox{\hskip 1pt}}

\newcommand{\bcode}{{\rm BC}}
\newcommand{\bcodefun}{\pi}
\newcommand{\bolds}{\tboldsymbol{\Sigma}}
\newcommand{\boldp}{\tboldsymbol{\Pi}}
\newcommand{\boldd}{\tboldsymbol{\Delta}}
\newcommand{\boldg}{\tboldsymbol{\Gamma}}

\newcommand{\leqW}{\leq_{\textrm{W}}}
\newcommand{\nleqW}{\nleq_{\textrm{W}}}
\newcommand{\leW}{<_{\textrm{W}}}
\newcommand{\equivW}{\equiv_{\textrm{W}}}
\newcommand{\geqW}{\geq_{\textrm{W}}}
\newcommand{\pipeW}{|_{\textrm{W}}}

\newcommand{\BRoot}{\mathrm{BRoot}}
\newcommand{\Nash}{\mathrm{Nash}}

\newcommand{\proj}{\textrm{proj}}
\newcommand{\psc}{\textrm{psc}}
\newcommand{\F}{\mathcal{F}}
\newcommand{\R}{\mathbb{R}}
\newcommand{\BMRoot}{\mathrm{BMRoot}}

\newcommand{\tc}[1]{\textcolor{purple}{#1}}

\title{The Weihrauch degree of finding Nash equilibria in multiplayer games}

\author{
Tonicha Crook\thanks{This work is supported by the UKRI AIMLAC CDT, cdt-aimlac.org, grant no. EP/S023992/1.} 
~\& Arno Pauly
\institute{Department of Computer Science\\Swansea University\\Swansea, UK}
\email{t.m.crook15@outlook.com}
\email{Arno.M.Pauly@gmail.com}
}

\def\titlerunning{Finding Nash equilibria in multiplayer games}
\def\authorrunning{T.~Crook \& A.~Pauly}
\maketitle

\begin{abstract}
Is there an algorithm that takes a game in normal form as input,
and outputs a Nash equilibrium? If the payoffs are integers, the
answer is yes, and a lot of work has been done in its computational
complexity. If the payoffs are permitted to be real numbers, the
answer is no, for continuity reasons. It is worthwhile to
investigate the precise degree of non-computability (the Weihrauch
degree), since knowing the degree entails what other approaches are
available (eg, is there a randomized algorithm with positive
success change?). The two-player case has already been fully
classified, but the multiplayer case remains open and is addressed
here. Our approach involves classifying the degree of finding roots
of polynomials, and lifting this to systems of polynomial
inequalities via cylindrical algebraic decomposition.

ACM classification: Theory of computation--Computability; Mathematics of computing--Topology; Mathematics of computing--Nonlinear equations
\end{abstract}

%\begin{keywords}
% computable metric space, recursive presentation, recursive Polish space, effectively Borel measurable, Cauchy completion
%\end{keywords}

\section{Introduction}
Is there an algorithm that reads games in strategic form and outputs some Nash equilibrium? This question is not only relevant for practical applications of Nash equilibria, whether in economics or computer science, but it also plays a central role in justifying Nash equilibrium as the outcome of rational behavior. If an agent can discern their own strategy in a Nash equilibrium (in order to follow it), then all agents together ought to be able to compute a Nash equilibrium.

If the payoffs in our games are given as integers, the existence of algorithms to find Nash equilibria is readily verified. Here the decisive question is how efficient these algorithms can be. For two-player games, the problem is $\mathrm{PPAD}$-complete \cite{denge}, whereas the multiplayer variant is complete for $\mathrm{FIXP}$ \cite{etessami}\footnote{The $\mathrm{PPAD}$-completeness result from \cite{daskalakis} is for $\varepsilon$-Nash equilibria, not for actual Nash equilibria.}. These complexity classifications are still a challenge for justifying Nash equilibrium as a solution concept, since they are widely believed to be incompatible with the existence of efficient algorithms.

Our focus here is on payoffs given as real numbers. Essentially, this means that our algorithm has access to arbitrarily good approximations for the payoffs. However, an algorithm cannot confirm that two real inputs are equal -- and it cannot even pick a true case between \emph{the first input is not smaller} and \emph{the second input is not smaller}. This is the non-constructive principle $\mathrm{LLPO}$, which is easily seen to correspond to finding Nash equilibria in single-player games with just two options. We should not stop with this negative answer to our initial question, but instead, explore \emph{how non-computable} the task of finding Nash equilibria is. As usual, this means identifying the degree of the problem for a suitable notion of reducibility. Here, this notion is Weihrauch reducibility.

Besides satisfying our curiosity, classifying the Weihrauch degree of finding Nash equilibria lets us draw some interesting conclusions. For example, there is a Las Vegas algorithm to compute Nash equilibria, but we cannot provide any lower bound for its success rate. We will discuss these consequences further in Section \ref{sec:consequences}. For games with one or two players, a complete classification has already been obtained in \cite{paulyincomputabilitynashequilibria}, but the situation for multiplayer games remained open and will be addressed here.

We use the well-known ``algorithm'' called Cylindrical Algebraic Decomposition (CAD) with a few modifications to reach our results. The modifications are necessary because in its original form, CAD assumes the equality of coefficients to be decidable. We explore the computable content of CAD by investigating each aspect of the algorithm to what extent they are computable when working with real numbers. The obstacles can be overcome by moving to suitable over-approximations.

\section{Constructivism in Game Theory \& Bounded Rationality}
\label{sec:context}
The study of constructive aspects of game theory, the exploration of instances and non-computability, and the overarching call for a \emph{more} constructive game theory has a long history. Ever since Nash's seminal contribution, Brouwer's Fixed Point Theorem has entered the foundations of game theory. Ironically\footnote{Brouwer was one of the first and strongest proponents of intuitionism -- and famous for a theorem that is not constructively valid.}, Brouwer's fixed point theorem does not admit a constructive proof. In fact, Orevkov \cite{orevkov} proved that it is false in Russian constructivism. The Weihrauch degree of Brouwer's fixed point theorem was classified in \cite{paulybrattka3}, and is the same as that of Weak K\"onig's Lemma.

The same Weihrauch degree appears behind various examples of non-computability in game theory, such as Gale-Stewart games without computable winning strategies \cite{cenzer,paulyleroux3-cie}; or behind the observation that in the infinite repeated prisoner's dilemma, there is a computable strategy that has no computable best-response (e.g.~\cite{knoblauch}, \cite{nachbar}). These examples have in common that they pertain to infinite duration games, not to finite games in normal form as we study here.

Rabin exhibited a sequential game with three rounds, with moves taken from $\mathbb{N}$, such that it is decidable who wins a given play, with the result that the second player has a winning strategy, but the second player also has no computable winning strategy \cite{rabin}. The Weihrauch degree inherent to this construction has not been properly investigated, but it already follows from Rabin's analysis that it is strictly above Weak K\"onig's Lemma.

The setting of finite games in normal form was considered constructively by Bridges in \cite{bridges}, and already showed that the minmax theorem cannot be proven in that setting as it entails the non-constructive principle $\mathrm{LLPO}$. Bridges and coauthors also delved into the construction of utility functions from preferences in a constructive framework and revealed various obstacles \cite{bridges7,bridges3,bridges4}.

Several authors (including the second author of this article) have made the case that for game theory it is more important to work constructively than for other areas of mathematics. The reason is that the solution concepts of game theory are explicitly assumed to be the result of some decision-making process by agents, which should follow the usual restrictions for computability. As discussed in \cite{paulyphd}, the requirement of falsifiability inherently constrains the level of non-constructiveness a scientific theory can exhibit. For instance, the Weihrauch degree of Weak K\"onig's Lemma is consistent with falsifiability, just not with the stronger requirement we propose for game theory. Other appeals for a more constructive approach to game theory have been presented, as demonstrated by Velupillai \cite{velupillai2,velupillai}.

\section{Computable analysis and Weihrauch reducibility}
Computability in the countable, discrete realm may be the better-known concept, through the notion of Turing computability\footnote{There also is the algebraic approach to computability as put forth by Blum, Shub, and Smale \cite{blum2}. That model does not fit the justification for why computability is required in game theory. Nash equilibria still are not computable in the BSS-model though \cite{paulybss}.} works perfectly well on most spaces of interest of cardinality up to the continuum. The field that delves into the study of computability in such settings is referred to as \emph{computable analysis}. A standard textbook is \cite{weihrauchd}. A quick introduction in a similar style is available as \cite{brattkaintro}. A concise, more general treatment is found in \cite{pauly-synthetic}.

Here, we just try to give an intuition for the notion of computability and refer to the references above for details. Turing machines inherently possess infinite tapes, allowing us to employ infinite binary sequences as their inputs and outputs. Computations then no longer halt but instead continue to produce more and more output. To get computability for interesting objects such as the reals, we code them via the infinite binary sequences. This yields the notion of a \emph{represented space}. In the case of the reals, an encoding based on the decimal or binary expansion would yield an unsatisfactory notion of computability (as multiplication by 3 would not be computable). However, an encoding via sequences of rational numbers converging with a known rate (e.g.~we could demand that $|q_n - x| < 2^{-n}$, where $x$ is coded real and $q_n$ the $n$-th approximation) works very well \cite{turingb}. Essentially, this approach renders all naturally occurring continuous functions computable. The standard representation of the real numbers is also consistent with assuming that real numbers are obtained by repeating physical measurements over and over and thus obtaining higher and higher expected accuracy \cite{paulymeasurement}.

A central notion in computable analysis is that of a multivalued function, where multiple valid outputs are permitted. Individually, these can just be defined as relations of valid input/output combinations. However, the composition of multivalued functions differs from the composition of relations. The motivation for utilising multivalued functions comes from both practical applications and the foundational model. Since inputs will often have many different names, an algorithm can easily produce names of different outputs based on different names of the same input. Nash equilibria are a good example of the motivation from applications, since a game can have multiple Nash equilibria, and we may not want to specify a particular one as the desired solution.

The framework for studying degrees of non-computability in computable analysis is Weihrauch reducibility. A multivalued function between represented spaces is Weihrauch reducible to another if there is an otherwise computable procedure invoking the second multivalued function as an oracle exactly once that solves the first. The Weihrauch degrees are the equivalence classes for Weihrauch reductions. We write $f \leqW g$, $f \leW g$ and $f \equivW g$ for $f$ being Weihrauch reducible to $g$, $f$ being strictly Weihrauch below $g$ and $f$ being Weihrauch equivalent to $g$ respectively.

The notion of Weihrauch reducibility was popularized by Brattka and Gherardi \cite{brattka3,brattka2}. A comprehensive introduction and survey are available as \cite{pauly-handbook}. Central for our results are two closure operators on the Weihrauch degrees representing allowing multiple invocations of the oracle: The degree $f^*$ represents being allowed to invoke $f$ any finite number of times \emph{in parallel}, meaning that all queries to $f$ can be computed without knowing any of the answers. The degree $f^\diamond$ represents being allowed to invoke $f$ any finite number of times (not specified in advance), where later queries can be computed from previous answers. The degree $f^\diamond$ is the least degree above $f$ which is closed under composition \cite{westrick}. Additionally, we employ the operator $\bigsqcup$, where $\bigsqcup_{n \in \mathbb{N}} f_n$ receives an $n \in \mathbb{N}$ together with an input for $f_n$, and returns a matching output.

A source of important benchmark degrees in the Weihrauch lattice is the \emph{closed choice principles}. Informally spoken, the input is information about what does {\bf not} constitute a valid output, and the output is something avoiding these obstructions. A specific closed choice principle, All-or-Unique choice, plays a central role in our investigation and will be discussed in detail below. We also make occasional reference to the \emph{finite closed choice} principle, denoted by $\C_k$. Here the ambient space is $\{0,1,\ldots,k-1\}$. An input is an enumeration of a (potentially empty) subset that excludes at least one element, valid outputs are any numbers from $\{0,1,\ldots,k-1\}$ not appearing in the enumeration. These principles form in increasing hierarchy, and already $\C_2$ (often also called $\mathrm{LLPO}$) is non-computable.

\section{Overview of our results}
Once we have established the method for representing real numbers, it is trivial to obtain a representation for finite games in strategic form and a representation for mixed strategy profiles. We can then define the multivalued function $\mathrm{Nash}$ mapping finite games in strategic form to some Nash equilibrium. We refer to the restriction of $\mathrm{Nash}$ to two-player games as $\mathrm{Nash}_2$. Our main goal is to classify the Weihrauch degree of $\mathrm{Nash}$. We do this by comparing it to a benchmark principle called All-Or-Unique Choice, $\aouc$. Essentially, $\aouc$ receives an abstract input that expresses which $x \in \uint$ are valid solutions as follows: Initially, all of $\uint$ is a valid solution. This could remain the case forever (the \emph{all}-case), or at some point, we receive the information that there is just a unique correct answer, and we are told what that one is (the \emph{unique}-case).

Another highly relevant multivalued function for us is $\BRoot$. Let $\BRoot : \mathbb{R}[X] \mto \uint$ map real polynomials to a root in $\uint$, provided there is one. $\BRoot$ is allowed to return any number $x \in \uint$ if there is no such root. Let $\BRoot_{k}$ ($\BRoot_{\leq k}$) be the restriction of $\BRoot$ of polynomial of degree (less-or-equal than) $k$. In particular, we see that $\BRoot_{\leq 1}$ is just the task of solving $bx = a$. The obstacle for this is that we do not know whether $b = 0$ and anything $x \in \uint$ is a solution, or whether $b \neq 0$ and we need to answer $\frac{a}{b}$. Following an observation by Brattka, it was shown as \cite[Proposition 8]{pauly-kihara2-mfcs} that $\aouc \equivW \BRoot_{\leq 1}$. The main result from \cite{paulyincomputabilitynashequilibria} is $\mathrm{Nash}_2 \equivW \aouc^*$. The remaining ingredient of our first main result is found in Corollary \ref{corr:nashaoucdiamond} in Subsection \ref{subsec:multivariate}: 

\begin{theorem}
\label{theo:main}
$\aouc^* \leqW \mathrm{Nash} \leqW \aouc^\diamond$
\end{theorem}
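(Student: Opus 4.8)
\emph{Proof idea.}
The lower bound is immediate from known results: $\Nash_2$ is by definition the restriction of $\Nash$ to two-player games, so $\aouc^* \equivW \Nash_2 \leqW \Nash$ by the main theorem of \cite{paulyincomputabilitynashequilibria}. The work lies in the upper bound $\Nash \leqW \aouc^\diamond$. The set of Nash equilibria of a game $G$ with real payoffs is a nonempty closed semialgebraic subset of the product of the strategy simplices, cut out by the simplex facet (in)equalities together with the inequalities $u_i(\sigma) \geq u_i(s,\sigma_{-i})$ over all players $i$ and pure strategies $s$; the occurring polynomials have degree at most the number of players and coefficients computable from $G$. The number of variables and polynomials and their degrees depend only on the \emph{shape} of $G$ (the numbers of players and of strategies per player), which is part of the discrete input, so it suffices to give, uniformly in the shape, a reduction to $\aouc^\diamond$ of the problem: \emph{given finitely many real polynomials of prescribed degrees whose common solution set is nonempty, output a point of it}. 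Reading off the shape and running the corresponding reduction is then a single reduction $\Nash \leqW \aouc^\diamond$, since $\aouc^\diamond$ imposes no bound on the number of oracle calls used.

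The combinatorial heart is a lemma: for every $k$, $\BRoot_{\leq k} \leqW \aouc^\diamond$, and moreover from a polynomial of degree $\leq k$ one can, with boundedly many calls to $\aouc$, compute a finite list of reals in $\uint$ which -- unless the polynomial is identically zero -- contains all of its roots in $\uint$. I would prove this by induction on $k$, the base $k \leq 1$ being $\BRoot_{\leq 1} \equivW \aouc$ \cite[Proposition 8]{pauly-kihara2-mfcs}. For the step, given $p$ of degree $\leq k$, recursively compute a finite list exhausting the roots of $p'$ (degree $\leq k-1$) in $\uint$; the complement in $\uint$ of this finite set is a finite union of open intervals, on each of which $p'$ has constant sign, so $p$ is strictly monotone there and has at most one root; the unique point of the corresponding closed interval minimising $|p|$ -- which is that root whenever one exists -- is computable without any oracle. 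These points, together with the already-found roots of $p'$, form a finite list covering the roots of $p$ in $\uint$, the degenerate case $p' \equiv 0$ (so $p$ is constant, hence identically zero once it has a root) being absorbed by the freedom to output anything. This establishes the refinement. For $\BRoot_{\leq k}$ itself one additionally uses a single instance of finite closed choice $\C_n$ -- whose recognisable obstructions are ``$\min |p| > 0$ on this interval'' and at least one of whose options is valid because $p$ genuinely has a root in $\uint$ -- to select an interval from which to read the root off. Thus $\BRoot_{\leq k}$, and hence $\BRoot$, is computed by finitely many oracle calls each reducible to $\aouc^\diamond$, and closure of $\aouc^\diamond$ under composition \cite{westrick} finishes the lemma.

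The remaining step is to run Cylindrical Algebraic Decomposition with its non-computable steps replaced by over-approximations, and this is where I expect the main difficulty. Forming the projection polynomials presupposes knowing leading coefficients, hence true degrees, and equality with zero is undecidable for real coefficients; the cure is to put \emph{all} coefficients of every polynomial that arises into the projection set, so that over each cell of the resulting finer decomposition every coefficient is sign-invariant, and then in the lifting phase to isolate roots over a sample point using the \emph{nominal} degree -- the lemma isolates all real roots anyway, and a superfluous cell wall introduced because a coefficient was secretly zero does no harm, the relevant polynomial then vanishing identically along that cell. In this way no equality between reals is ever decided, sign-invariance of each defining polynomial on each cell is retained, every univariate root isolation is a call to $\BRoot_{\leq d}$, and selecting a cell on which all defining inequalities hold (one exists by nonemptiness together with sign-invariance) is again an instance of $\C_n$, now with recognisable obstruction ``some defining polynomial is negative at this cell's sample point''; the sample point of the chosen cell then lies in the solution set. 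For a fixed shape the whole run makes a number of oracle calls bounded in terms of the shape, each reducible to $\aouc^\diamond$, so by closure under composition the procedure -- and hence $\Nash$ -- reduces to $\aouc^\diamond$. The reason the bound cannot be tightened to $\aouc^*$ along these lines is that the construction is thoroughly adaptive: which monotonicity interval or sub-cell to pursue, and which degrees are in force, depend on earlier oracle answers, so the queries cannot be posed in parallel. The reason the number of queries stays finite -- rather than one per bit of precision, which would only yield a Weak K\"onig's Lemma style bound -- is the derivative recursion in the lemma, exploiting that a degree-$k$ polynomial has at most $k-1$ critical points.
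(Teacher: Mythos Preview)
Your overall architecture is correct and matches the paper's: the lower bound is exactly the observation $\aouc^* \equivW \Nash_2 \leqW \Nash$, and the upper bound goes through a CAD-with-over-approximation argument reducing the polynomial-inequality formulation of Nash equilibrium to $\aouc^\diamond$. Two substantive differences are worth noting.

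First, your univariate lemma proceeds by induction on the degree via the derivative, while the paper instead invokes \cite[Theorem~4.1]{paulyleroux} (root-finding for functions with a bounded number of local extrema) as a black box. Your route is more elementary, but note that the paper thereby obtains the sharper $\BRoot \leqW \aouc^*$ rather than merely $\aouc^\diamond$; in particular your remark that ``the construction is thoroughly adaptive'' understates what is possible in the univariate case. One small sloppiness in your induction step: the claim that the argmin of $|p|$ on each closed subinterval is ``computable without any oracle'' presupposes $p' \not\equiv 0$ (otherwise $p$ is constant and the argmin is not unique, so the computation stalls). You flag the degenerate case, but ``absorbed by the freedom to output anything'' only licenses an arbitrary \emph{final} output, not a convergent intermediate one; wrapping each argmin in its own $\aouc$-instance (collapse once $p' \neq 0$ is witnessed) repairs this and stays within $\aouc^\diamond$.

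Second, and more importantly, the paper's CAD analysis is not merely an over-approximated sequential CAD. Its key device is the notion of a \emph{potential polynomial}: rather than solving the $\aouc$-instances at each level and then substituting the resulting sample points upward, one substitutes the uncollapsed $\aouc$-instances themselves into the next-level polynomials and defers all collapsing to the end. This shows that a full representative sample can be produced by $\aouc^*$ alone (Theorem~\ref{theo:cad}); a single trailing application of $\bigsqcup_n \C_n$ then selects a feasible sample point, giving $\mathrm{BPIneq} \leqW \big(\bigsqcup_n \C_n\big) \star \aouc^* \equivW \aouc^\diamond$. Your fully adaptive CAD reaches the same endpoint but does not expose this two-phase structure, which is precisely what makes the question $\Nash \equivW \aouc^*$? look tractable. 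Incidentally, your reduction of $\Nash$ directly to a single feasible instance of $\mathrm{BPIneq}$ (via the best-response inequalities $u_i(\sigma) \geq u_i(s,\sigma_{-i})$) is cleaner than the paper's detour through support sets; the paper instead first uses $\bigsqcup_n \C_n$ to pick a support and only then calls $\mathrm{BPIneq}$, but since the global system is feasible by Nash's theorem this extra step is not needed.
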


As established in \cite{pauly-kihara2-mfcs} $\aouc^* \leW \aouc^\diamond$, so at least one of the two reductions in Theorem \ref{theo:main} is strict. Furthermore, from the outcomes presented in \cite{pauly-kihara2-mfcs} it also follows that $\aouc^\diamond \equivW \aouc^* \star \aouc^*$, where $f \star g$ lets us first apply $g$, then do some computation, and then apply $f$. Consequently, any finite number of oracle calls to $\aouc$ can be rearranged to happen in two phases, where all calls within one phase are independent of each other. Drawing upon the outcome in  \cite{paulyincomputabilitynashequilibria}, we can conclude that from a multiplayer game $G$ we can compute a two-player game $G'$, take any Nash equilibrium of $G'$, and compute another two-player game $G''$ from that such that given any Nash equilibrium to $G''$ we could compute a Nash equilibrium for $G$. It seems very plausible to us that $\aouc^* \equivW \mathrm{Nash}$ should hold, but constructing a proof for this equivalence has posed a challenge. Thankfully, the outcomes discussed in Section \ref{sec:consequences}, stemming from Theorem \ref{theo:main}, do not hinge on the resolution of these specific details.

The lower bound in Theorem \ref{theo:main} clearly already follows from $\aouc^* \equivW \mathrm{Nash}_2$. For the upper bound, we bring three ingredients together. The first result shows that a preprocessing step that identifies a potential support for a Nash equilibrium can be absorbed into the Weihrauch degree $\aouc^*$. Once we know the support we are going to use, we have a solvable system of polynomial inequalities whose solutions are all Nash equilibria. 

We are thus led to investigate the Weihrauch degrees pertaining to solving systems of polynomial (in)equalities. We completely classify the degree of finding (individual) roots within given bounds of finitely many (univariate) polynomials:

\begin{theorem}[Proven as consequence of Corollary \ref{corr:broot} below]
\label{theo:polys}
$\aouc^* \equivW \BRoot^*$
\end{theorem}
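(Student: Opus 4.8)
We establish the two directions separately; essentially all of the work lies in the inequality $\BRoot^* \leqW \aouc^*$. The easy direction, $\aouc^* \leqW \BRoot^*$, is immediate from the cited equivalence $\aouc \equivW \BRoot_{\leq 1}$ together with the trivial reduction $\BRoot_{\leq 1} \leqW \BRoot$ (restricting a problem only makes it easier) and the monotonicity of $f \mapsto f^*$: thus $\aouc^* \equivW \BRoot_{\leq 1}^* \leqW \BRoot^*$.

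For the converse I would first reduce to a single application. Since the star operator is idempotent, $(\aouc^*)^* \equivW \aouc^*$, it suffices to show $\BRoot \leqW \aouc^*$; then $\BRoot^* \leqW (\aouc^*)^* \equivW \aouc^*$. A name of a $\BRoot$-instance carries a degree bound $n$ --- this is how the represented space $\mathbb{R}[X]$ is built up, and is precisely what makes the restrictions $\BRoot_{\leq n}$ meaningful --- so after reading $n$ it is enough to produce, uniformly in $n$, a reduction $\BRoot_{\leq n} \leqW \aouc^*$ whose number of parallel $\aouc$-queries may depend on $n$.

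The substance is this uniform family of reductions, which I would build by induction on $n$ along the lines of cylindrical algebraic decomposition, replacing every step where textbook CAD would need to decide an equality between real coefficients by a computable over-approximation. The base case is trivial: $\BRoot_{\leq 0}$ is computable (always output $0$) and $\BRoot_{\leq 1} \equivW \aouc$ by the cited result. For the inductive step, given $p$ with $\deg p \leq n$, the (approximate) zero sets of the derivatives $p', p'', \dots, p^{(n-1)}$ cut $\uint$ into finitely many pieces on each of which $p$ is monotone; on a piece where the derivative of $p$ stays bounded away from $0$ and $p$ changes sign, the unique root of $p$ there is already a computable function of $p$ --- and it is precisely this, that such a root is pinned down exactly rather than approached by bisection, that keeps the reduction inside $\aouc^*$ rather than merely below Weak K\"onig's Lemma. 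The pieces not resolved this way shrink toward common zeros of $p$ with higher derivatives, which I would locate via the inductive hypothesis applied to those derivatives; the genuinely delicate cases --- a vanishing leading coefficient, roots of even multiplicity, and pieces on which $p$ or some $p^{(i)}$ vanishes identically --- are handled by over-approximating the relevant vanishing loci, at the cost of posing some redundant queries.

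The main obstacle, and where I expect the real effort to lie, is that read naively this recursion chains the oracle: extracting the critical structure and only then locating a root of $p$ is a two-round use of $\aouc^*$, which yields only $\BRoot \leqW \aouc^* \star \aouc^* \equivW \aouc^\diamond$, and since $\aouc^* \leW \aouc^\diamond$ that bound is strictly too weak for the theorem. The reduction must therefore be reorganised so that every $\aouc$-query --- those attached to $p$ itself and those arising from each derivative on each piece of each subdivision that is needed --- is computed from $p$ alone and posed in one parallel batch, the answers being combined only at the very end into a single root. Arranging the over-approximations (of the zero sets of the $p^{(i)}$, of the leading coefficient, and of the degenerate pieces) so that this collapse of the CAD recursion into a single parallel layer is sound is, I expect, the crux of the argument.
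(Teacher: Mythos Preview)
Your easy direction is fine and matches the paper. Your reduction of $\BRoot^* \leqW \aouc^*$ to the single-instance claim $\BRoot \leqW \aouc^*$ via idempotence of $(\cdot)^*$ is also correct, and you are right that a name for a polynomial carries a degree bound $n$, so it suffices to handle $\BRoot_{\leq n}$ uniformly in $n$.

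The gap is exactly where you say it is, and you do not close it. Your inductive scheme first uses $\aouc^*$ to locate the critical structure (zeros of $p',p'',\dots$), and only \emph{then} decides, on each resulting piece, whether $p$ changes sign there and where its root lies. Deciding which pieces carry a sign change requires evaluating signs of $p$ at points that are themselves outputs of the first batch of $\aouc$-queries; that is a genuine second round, not just post-processing by a computable map. As you note, this yields only $\aouc^\diamond$, and you leave the promised ``collapse into a single parallel layer'' as an expectation rather than an argument. It is not clear that your CAD-flavoured recursion \emph{can} be collapsed; indeed, in the multivariate setting the paper's CAD analysis only achieves $\aouc^\diamond$, and nothing in your outline explains why the univariate case should behave better.

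The paper avoids the obstacle entirely by a different, much shorter route. The key external input is \cite[Theorem~4.1]{paulyleroux}: for a continuous function with at most $k$ local minima one can compute a $3^k$-tuple of reals containing all its zeros in a given interval. A polynomial of degree $\leq 2k+1$ is either identically zero or has at most $k$ local minima, and ``$p\neq 0$'' is semi-decidable. So one prepares a single $\mathrm{AoUC}_{\uint^{3^k}}$-instance that stays at the full cube while $p=0$ is consistent and, the moment $p\neq 0$ is confirmed, collapses to the $3^k$-tuple of candidate roots; in parallel, a $\C_{3^k}$-instance rejects those candidates that turn out not to be roots in $\uint$. This gives $\BRoot_{\leq 2k+1} \leqW \aouc \times \C_{3^k}$ directly, with no induction and no sequential dependency; since $\C_{3^k} \leqW \C_2^* \leqW \aouc^*$, the bound $\BRoot \leqW \aouc^*$ follows. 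The conceptual point you are missing is that the ``all'' branch of $\aouc$ already absorbs the zero-polynomial case, so one does not need to analyse the derivative structure at all.
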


Consequently, disregarding the precise count of required oracle calls, the task of finding polynomial roots does not pose a greater challenge than solving equations in the form $bx = a$. The proof of Theorem \ref{theo:polys} in turn makes use of a result from \cite{paulyleroux} on finding zeros of real functions with finitely many local minima.

To prove Theorem \ref{theo:main} we need more, namely we need to solve systems of (in)equalities for multivariate polynomials. This aspect is addressed in the forthcoming Corollary \ref{corr:cadmain}. Through an examination of cylindrical algebraic decomposition, augmented with minor adaptations, we demonstrate that access to $\aouc^*$ lets us compute finitely many candidate solutions including a valid one. A comprehensive introduction to cylindrical algebraic decomposition (CAD) is available in \cite{Jirstrand}. CAD involves the partitioning of $\mathbb{R}^n$ into semi-algebraic cylindrical cells.

%This algorithm has two main phases, projection and extension. The projection operator takes polynomials in $k$ variables and produces another set in $k-1$ variables repeatedly until it identifies points on the real line. This part is effective. We can look at the intervals between 0 and 1 and use CAD to break them down. Using finite closed choice as a preprocessing step we can choose the broken down intervals that work. We can then use $\aouc^\diamond$ to locate the specific interval of the root of the polynomial if one exists.

\section{Roots of polynomials}
In this section, we consider the computability aspects of finding the roots of polynomials. Our exploration of polynomial root finding unfolds across three distinct scenarios. In Subsection \ref{subsec:monic} we look into monic univariate polynomials. In Subsection \ref{subsec:univariate} we drop the restriction to monic, and in Subsection \ref{subsec:multivariate} we handle multivariate polynomials.

In order to conceptualise the task of polynomial root finding, it becomes essential to establish a clear methodology for representing polynomials. A polynomial is represented by providing an upper bound to its degree, plus a tuple of real numbers constituting all relevant coefficients. For example, we could be given the same polynomial as either $0x^2 + 3x - 5$ or $3x - 5$. If we were demanding to know the exact degree, we could no longer compute the  addition of polynomials and the multiplication of polynomials with real numbers, which would be clearly unsatisfactory. This matter is discussed in detail in \cite[Section 3]{pauly-steinberg}. We denote the represented space of real univariate polynomials as $\mathbb{R}[X]$ and the space of real multivariate polynomials as $\mathbb{R}[X^*]$. For the latter, we assume that each polynomial comes with the exact information of what finite set of variables it refers to. Both $\mathbb{R}[X]$ and $\mathbb{R}[X^*]$ are coPolish spaces, see \cite{schroder6,paulydebrecht4,callard}.

\subsection{Monic univariate polynomials}
\label{subsec:monic}
It is known since the dawn of computability theory that the fundamental theorem of algebra is constructive, more precisely, that given a monic polynomial with real (or complex) coefficients, we can compute the unordered tuple of its complex roots, each repeated according to its multiplicity. The latter formulation was established by Specker \cite{specker-fta}.

Of course, we cannot decide which of the complex roots are real. The task of selecting a real number from a $k$-tuple of complex numbers containing at least one real number is Weihrauch equivalent to $\C_k$. A similar task equivalent to $\C_k$ is to identify a $0$-entry in a $k$-tuple of real numbers containing at least one $0$. Given real numbers $\varepsilon_0, \varepsilon_1,\ldots,\varepsilon_{k-1}$ we can construct the monic polynomial $\Pi_{i < k}((x - \frac{i}{k})^2 + |\varepsilon_i|)$, which will have a root at $\frac{i}{k}$ iff $\varepsilon_i = 0$. This shows that finding real roots of monic polynomials is Weihrauch equivalent to $\C_2^* = \bigsqcup_{k \in \mathbb{N}} \C_k$.  

\subsection{Univariate polynomials}
\label{subsec:univariate}
In general, we do not know the degree of a polynomial and thus cannot restrict ourselves to the monic case for root-finding. It has been shown by Le Roux and Pauly \cite{paulyleroux} that knowing a finite upper bound $k$ on the number of local extrema lets us find a root of a continuous function (if it has one in a bounded interval) using $\C_{3^k}$. Essentially, this observation suggests that for polynomial root finding in $\uint$, knowing the precise degree versus knowing an upper bound makes only a quantitative, but not a qualitative difference -- if we exclude the zero polynomial!

While it may seem counterintuitive that it should be the zero polynomial that makes root finding more difficult, we will see that this is the case, and explore how much.

\begin{definition}
Let $\BRoot : \mathbb{R}[X] \mto \uint$ map real polynomials to a root in $\uint$, provided there is one, and to arbitrary $x \in \uint$ otherwise. Let $\BRoot_{k}$ ($\BRoot_{\leq k}$) be the restriction of $\BRoot$ to polynomials of degree (less than -or-equal to) $k$.
\end{definition}

We have defined $\BRoot$ to be a total map. If the input is a polynomial without a root in the unit interval, it will return an arbitrary element of the unit interval instead. However, the restriction of $\BRoot$ which is defined only on polynomials with a root in the unit interval is in fact equivalent to $\BRoot$ itself. It's worth noting that this is a special case of Lemma~\ref{lemma:makezero} which we will prove later.

\begin{proposition}\label{prop4}
$\BRoot_{\leq 2k+1} \leqW \aouc \times \C_{3^k}$
\begin{proof}
A polynomial $p$ of degree at most $2k+1$ is either the 0 polynomial, or has at most $k$ local minima. If it is not $0$, we will recognize this, and moreover, can find some $a \leq 0$, $b \geq 1$ with $p(a) \neq 0 \neq p(b)$. In this case, \cite[Theorem 4.1]{paulyleroux} applies and lets us compute a $3^k$-tuple of real numbers amongst which all zeros of $p$ between $a$ and $b$ will occur.

We recall that $\aouc \equivW \textrm{AoUC}_{[a,b]^n}$ \cite[Corollary 12]{pauly-kihara2-mfcs} for all $n > 1$. We let the input to $\textrm{AoUC}_{[a,b]^{3^k}}$ be $[a,b]^{3^k}$ as long as $p = 0$ is consistent, and if we learn that $p \neq 0$, we collapse the interval to the $3^k$-tuple we obtain from \cite[Theorem 4.1]{paulyleroux}. The input to $\C_{3^k}$ is initially $\{0,\ldots,3^k-1\}$. We only remove elements after we have confirmed $p \neq 0$, and then we remove $j$ if the $j$-th candidate obtained from \cite[Theorem 4.1]{paulyleroux} is not a root of $p$, or falls outside of $[0,1]$.

To obtain the answer to $\BRoot_{\leq 2k+1}$, we just use the answer from $\C_{3^k}$ to indicate the answer of which component of the tuple provided by $\textrm{AoUC}_{[a,b]^{3^k}}$ to use as the final output.
\end{proof}
\end{proposition}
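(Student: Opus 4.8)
The plan is to build the reduction so that it issues one query to $\aouc$ and one query to $\C_{3^k}$, both derived from the input polynomial $p$ alone (a genuine parallel product: neither query input depends on the other's output). The starting observation is elementary: if $\deg p \leq 2k+1$ and $p$ is not the zero polynomial, then $p'$ has degree at most $2k$ and hence at most $2k$ real zeros; since the local minima and local maxima of a nonconstant polynomial interlace, $p$ has at most $k$ local minima. The zero polynomial is the sole exception, and it is precisely the obstacle to route around, since no algorithm can confirm $p = 0$.

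Concretely, I would initially feed the trivial instances to both oracles: to $\aouc$, viewed as $\textrm{AoUC}_{\uint^{3^k}}$ via $\aouc \equivW \textrm{AoUC}_{\uint^{n}}$, the ``all'' instance; to $\C_{3^k}$, the empty exclusion list. In parallel I monitor the coefficients of $p$. If some coefficient is ever confirmed nonzero, then $p \neq 0$, and I search (running finitely many evaluations concurrently) for rationals $a \leq 0$ and $b \geq 1$ with $p(a) \neq 0 \neq p(b)$; these exist and the search halts because $p$ has finitely many zeros. Then \cite[Theorem 4.1]{paulyleroux}, applied to $p$ on $[a,b]$ with the bound $k$ on local minima, computably yields a $3^k$-tuple $(t_0, \ldots, t_{3^k-1})$ that contains every zero of $p$ in $[a,b]$, in particular every zero in $\uint$. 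At this point I collapse the $\aouc$-instance to the unique answer $(s_0, \ldots, s_{3^k-1})$, where $s_j$ is $t_j$ squeezed into $\uint$, and I enumerate into the exclusion list of the $\C_{3^k}$-instance every index $j$ for which I confirm $t_j \notin \uint$ or $p(t_j) \neq 0$. Given an answer $(s_0, \ldots, s_{3^k-1})$ from $\aouc$ and an answer $j$ from $\C_{3^k}$, the reduction outputs $s_j$.

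Correctness is a three-case check. If $p = 0$, nothing is ever collapsed, the oracles return arbitrary objects of the right type, and $s_j \in \uint$ is a valid root. If $p \neq 0$ has no zero in $\uint$, then any point of $\uint$ is acceptable and $s_j \in \uint$ regardless of $j$. If $p \neq 0$ has a zero $r \in \uint$, then $r = t_{j_0}$ for some $j_0$, which is never excluded (it satisfies $t_{j_0} \in \uint$ and $p(t_{j_0}) = 0$); hence $\C_{3^k}$ returns some $j$ that is never excluded, which forces $t_j \in \uint$ and $p(t_j) = 0$, so $s_j = t_j$ is a genuine zero in $\uint$.

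The one point I expect to require real care is ensuring $\C_{3^k}$ receives a legal instance: in the no-zero-in-$\uint$ case all $3^k$ indices can get excluded, which is not a valid $\C_{3^k}$-input. I would add the standard safeguard of never enumerating the would-be last surviving index and halting all exclusions once a single index remains; this keeps the closed set nonempty and, since the index $j_0$ above is never a candidate for exclusion, does not interfere with the with-zero case. Beyond this bookkeeping and the handling of the zero polynomial, the argument is just parallel search combined with the cited root-isolation theorem.
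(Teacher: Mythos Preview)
Your proof is correct and follows essentially the same approach as the paper: use $\textrm{AoUC}_{\uint^{3^k}}$ initialized to ``all'' and collapsed to the squeezed $3^k$-tuple from \cite[Theorem 4.1]{paulyleroux} once $p\neq 0$ is detected, and use $\C_{3^k}$ to weed out candidates that are seen to fall outside $\uint$ or to be non-roots, then output the selected component. Your explicit safeguard against emptying the $\C_{3^k}$-instance in the no-root case is a point the paper leaves implicit, and your correctness case analysis is more detailed, but the underlying reduction is the same.
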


The following is also a direct consequence of \cite[Theorem 12]{pauly-kihara5-arxiv}, but its proof is much more elementary; and a direct consequence of \cite[Proposition 17.4]{hoelzl}, but again with a simpler proof:
\begin{proposition}
\label{prop:noparallel}
$\aouc \times \C_2 \nleqW \BRoot$
\begin{proof}
Assume that the reduction would hold. Let $q$ be a name for an input to $\aouc$ which never removes any solutions (and never commits to not removing any solution). Assume further that there is some name $r$ for an input to $\C_2$ such that $(q,r)$ gets mapped to a non-zero polynomial $p$ by the inner reduction witness. Since $p$ is ensured to be non-zero already by a finite prefix of its names, and since the inner reduction witness is continuous, it holds that there is an $M \in \mathbb{N}$ such that any $(q',r)$ with $d(q,q') < 2^{-M}$ gets mapped to a non-zero polynomial. Restricting $\aouc$ to names from $\{q' \mid d(q,q') < 2^{-M}\}$ does not change its Weihrauch degree (as $q$ provides no information at all). By \cite[Corollary 4.3]{paulyleroux}, if we exclude the $0$ polynomial from the domain of $\BRoot$, then $\C_2^*$ suffices to find a root. We can thus conclude $\aouc \leqW \C^*_2$, but this contradicts \cite[Theorem 22]{paulyincomputabilitynashequilibria}.

Thus, it would need to hold that each pair $(q,r)$ gets mapped to the $0$ polynomial. But since answering constant $0$ is a computable solution to $\BRoot(0)$, this, in turn, would imply that $\C_2$ is computable, which is absurd. We have thus arrived at the desired contradiction.
\end{proof}
\end{proposition}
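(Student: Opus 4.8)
The plan is a proof by contradiction. Suppose $\aouc \times \C_2 \leqW \BRoot$, witnessed by a computable inner reduction $H$ (turning a name of an $\aouc \times \C_2$-instance into a name of a polynomial) and a computable outer reduction $K$ (turning such an instance together with a root of that polynomial into a valid output). I fix once and for all a ``standard'' name $q$ of the set $\uint$ seen as an element of $\mathcal{A}(\uint)$, i.e.\ the \emph{all}-case of $\aouc$. The reason for starting from $q$ is that $q$ carries no information: any $q'$ agreeing with $q$ on a long enough prefix is again a legitimate name of \emph{some} $\aouc$-instance (the all-case, or any unique-case, merely delayed), so restricting $\aouc$ to $\{q' \mid d(q,q') < 2^{-M}\}$ does not change its Weihrauch degree. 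Everything then hinges on whether, with the $\aouc$-component frozen to $q$, the reduction $H$ is ever forced to output a non-zero polynomial.

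\emph{First case:} there is a name $r$ of a $\C_2$-instance with $H(\langle q, r\rangle)$ a non-zero polynomial $p$. Being a non-zero polynomial is a $\Sigma^0_1$ property of the name (some coefficient is bounded away from $0$, and the degree bound is given), hence it is already secured by a finite prefix; since $H$ is continuous and $r$ is held fixed, there is $M \in \mathbb{N}$ such that $H(\langle q', r\rangle)$ is non-zero for every $q'$ with $d(q,q') < 2^{-M}$. I then compose the map $q' \mapsto H(\langle q', r\rangle)$ (which now always lands in the non-zero polynomials) with the root-finder of \cite[Corollary 4.3]{paulyleroux} --- which, via \cite[Theorem 4.1]{paulyleroux}, shows that $\BRoot$ restricted away from the $0$ polynomial reduces to $\C_2^*$ --- and finally apply $K$ and read off the $\aouc$-component. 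Using that restricting $\aouc$ to the ball around $q$ is harmless, this yields $\aouc \leqW \C_2^*$, contradicting \cite[Theorem 22]{paulyincomputabilitynashequilibria}.

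\emph{Second case:} for every name $r$ of a $\C_2$-instance, $H(\langle q, r\rangle)$ is the $0$ polynomial. Then the constant name of $0 \in \uint$ is a valid output of $\BRoot$ on $H(\langle q, r\rangle)$ (every point of $\uint$ is a root of the zero polynomial), so $K$ must turn $\langle q, r\rangle$ together with that fixed constant name into a correct output of $\aouc \times \C_2$ on $\langle q, r\rangle$; its $\C_2$-component then gives a computable solution of $\C_2$ for arbitrary $r$, which is absurd. Since the two cases are exhaustive, the assumed reduction cannot exist.

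The step I expect to require the most care is the first case: one must make sure the claim ``restricting $\aouc$ to a ball around the all-name does not change its degree'' is stated and applied correctly, and that the chain ``$H(\cdot,r)$, then the $\C_2^*$-root-finder, then $K$, then projection'' genuinely assembles into a single Weihrauch reduction from (restricted) $\aouc$ to $\C_2^*$ --- here it is convenient that $\C_2^*$ absorbs the extra computation and the finitely many parallel $\C_2$-calls that the cited root-finding bound needs (its $\C_{3^k}$-queries across all relevant degree bounds). The second case is comparatively immediate once one observes that $0$ is always a legal answer of $\BRoot$ on the zero polynomial, which degenerates the putative reduction into an outright computation of $\C_2$.
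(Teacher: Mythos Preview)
Your proposal is correct and follows essentially the same approach as the paper's own proof: fix the all-name $q$, split into the two cases depending on whether some $(q,r)$ maps to a non-zero polynomial, and derive in the first case $\aouc \leqW \C_2^*$ via \cite[Corollary 4.3]{paulyleroux} (contradicting \cite[Theorem 22]{paulyincomputabilitynashequilibria}) and in the second case the computability of $\C_2$. Your write-up is in fact more explicit about the mechanics (naming $H$ and $K$, spelling out why non-zeroness is open) than the paper's version.
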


While our preceding proposition shows the limitations of $\BRoot$ for solving multiple non-computable tasks in parallel, the following result from the literature reveals that the slightly more complex nature of $\aouc$ is central. Note that $\C_n \times \C_m \leqW \C_{n\cdot m}$, hence $\C_k$ has an inherently parallel nature for $k \geq 4$.

\begin{proposition}[{\cite[Proposition 4.6]{paulyleroux}}]
\label{prop:leroux46}
$\C_k \leqW \BRoot_{2k}$
\end{proposition}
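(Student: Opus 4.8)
The plan is to realise $\C_k$ by a Weihrauch reduction that makes a single query to $\BRoot_{2k}$, where the queried polynomial is built explicitly from the enumeration and the returned root is decoded back into an admissible answer. Fix once and for all $k$ distinct rationals $x_0 < x_1 < \cdots < x_{k-1}$ in $\uint$ (for instance $x_j = (j+1)/(k+1)$), and put $\delta := \min_{i\neq j}|x_i - x_j| > 0$. Given an input to $\C_k$, i.e.\ an enumeration of a set $S \subsetneq \{0,\dots,k-1\}$, I would compute for each $j < k$ a real $d_j \in \uint$ as follows: the approximations to $d_j$ output $0$ as long as $j$ has not been enumerated; the first stage $t$ at which $j$ appears, we fix $d_j := 2^{-t}$ and output that value from then on. Since $t$ is strictly larger than every earlier stage, this sequence converges fast enough to represent a computable real, uniformly in the input, and we have $d_j = 0$ exactly when $j \notin S$ and $d_j > 0$ exactly when $j \in S$.

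Now set $p(X) := \prod_{j=0}^{k-1}\bigl((X - x_j)^2 + d_j\bigr)$. Its coefficients are computable from the $d_j$ (Vieta's formulas), hence from the input, and being a product of $k$ monic quadratics it has leading coefficient $1$ and degree exactly $2k$, so $p$ is a legitimate element of $\dom(\BRoot_{2k})$. Each factor is nonnegative on $\mathbb{R}$, so $p \geq 0$ everywhere, and $p(x) = 0$ iff some factor vanishes, i.e.\ iff $x = x_j$ for an index $j$ with $d_j = 0$, i.e.\ iff $x = x_j$ for some $j \notin S$. Since the input to $\C_k$ always leaves at least one index out of $S$, such an $x_j \in \uint$ exists, so $p$ genuinely has a root in $\uint$, and $\BRoot_{2k}(p)$ must return some $x^* \in \uint$ with $p(x^*) = 0$, whence $x^* = x_j$ for some $j \notin S$. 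The corner case $S = \emptyset$ is handled automatically, since then $p = \prod_j (X-x_j)^2$ is not the zero polynomial and the oracle is still forced to return one of the $x_j$.

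Finally I would decode: because $x^*$ is guaranteed to lie in the fixed finite, well-separated set $\{x_0,\dots,x_{k-1}\}$, it suffices to take any rational approximation $q$ of $x^*$ with $|q - x^*| < \delta/2$; then there is a unique $j$ with $|q - x_j| < \delta/2$, and that $j$ satisfies $j \notin S$, so it is a valid answer to $\C_k$. All three steps — producing the $d_j$ from the enumeration, forming $p$, and decoding the returned root — are computable, and $\BRoot_{2k}$ is invoked once, which gives $\C_k \leqW \BRoot_{2k}$. The construction is short; the only points needing care are (i) keeping $p$ away from the zero polynomial even when nothing is ever enumerated, which the ``$+\,d_j$'' trick on \emph{squared} factors achieves; (ii) being able to identify the returned root, a priori just some real in $\uint$, with a concrete index — this works precisely because we know in advance the root is exactly one of finitely many well-separated rationals, making equality against that list decidable; and (iii) choosing the switch values $d_j = 2^{-t}$ small enough for the approximation sequence to converge at rate $2^{-n}$. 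None of these is deep, so I expect the argument to go through essentially as sketched.
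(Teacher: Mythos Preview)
Your argument is correct. The paper does not supply its own proof of this proposition; it simply cites \cite[Proposition~4.6]{paulyleroux}. Your construction --- encoding an instance of $\C_k$ as the monic degree-$2k$ polynomial $\prod_{j<k}\bigl((X-x_j)^2+d_j\bigr)$ with $d_j=0$ iff $j$ is never enumerated --- is exactly the natural approach and is essentially the one used in the cited source; the reduction, the fact that the polynomial always has a root in $\uint$, and the decoding step are all handled correctly.
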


\begin{corollary}
\label{corr:broot}
$\aouc \leW \BRoot \leW \aouc^*$
\begin{proof}
The first reduction follows from $\aouc \equivW \BRoot_{\leq 1}$ (\cite[Proposition 8]{pauly-kihara2-mfcs}). That it is strict comes from \cite[Proposition 4.6]{paulyleroux} showing that otherwise, we would have $\C_3 \leqW \aouc \leqW \lpo$, contradicting a core result from \cite{weihrauchc}.

The second reduction follows from Proposition~\ref{prop4}, together with $\C_{k+1} \leqW \C_2^k$ and $\C_2 \leqW \aouc$ (both from \cite{paulyincomputabilitynashequilibria}). Its strictness is a consequence of Proposition \ref{prop:noparallel}.
\end{proof}
\end{corollary}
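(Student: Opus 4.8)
The plan is to obtain both inequalities of $\aouc \leW \BRoot \leW \aouc^*$ by assembling the results already proved above, so the work is essentially bookkeeping about how many times and in what arrangement $\aouc$ is invoked; I do not expect a genuine obstacle here, only two small auxiliary facts to spell out.

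\emph{The reduction $\aouc \leqW \BRoot$, and its strictness.} For the reduction I would invoke $\aouc \equivW \BRoot_{\leq 1}$ from \cite[Proposition 8]{pauly-kihara2-mfcs} and observe that $\BRoot_{\leq 1}$ is just $\BRoot$ restricted to names carrying degree bound $1$, so $\BRoot_{\leq 1} \leqW \BRoot$ via the identity reduction. For strictness, suppose $\BRoot \leqW \aouc$. Proposition \ref{prop:leroux46} gives $\C_3 \leqW \BRoot_6$, and $\BRoot_6 \leqW \BRoot$ since a degree-$6$ polynomial is a legitimate input to $\BRoot$; hence $\C_3 \leqW \aouc$. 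On the other hand $\aouc \leqW \lpo$: query $\lpo$ on the $0/1$-sequence that switches from $0$ to $1$ exactly when the $\aouc$-instance collapses to a singleton, then output $0$ in the ``all'' case and the announced point in the ``unique'' case. Composing would give $\C_3 \leqW \lpo$, contradicting the classical fact that $\C_3 \nleqW \lpo$ from \cite{weihrauchc}.

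\emph{The reduction $\BRoot \leqW \aouc^*$, and its strictness.} Here I would use that every name of a polynomial in $\mathbb{R}[X]$ carries a bound $d$ on its degree. Given such a name, compute $k := \lceil (d-1)/2 \rceil$, so that the polynomial has degree at most $2k+1$; Proposition \ref{prop4} then yields $\BRoot_{\leq 2k+1} \leqW \aouc \times \C_{3^k}$. Using $\C_{3^k} \leqW \C_2^{\,3^k-1} \leqW \aouc^{\,3^k-1}$ (via $\C_{k+1} \leqW \C_2^k$ and $\C_2 \leqW \aouc$, both from \cite{paulyincomputabilitynashequilibria}) we get $\aouc \times \C_{3^k} \leqW \aouc \times \aouc^{\,3^k-1} = \aouc^{\,3^k}$. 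Since the exponent $3^k$ is read off the name before any oracle query is issued, and $\aouc^*$ is precisely the degree that permits an input-dependent finite number of parallel $\aouc$-calls, we conclude $\BRoot \leqW \bigsqcup_{d} \aouc^{\,3^{k_d}} \leqW \aouc^*$. For strictness, if $\aouc^* \leqW \BRoot$ held, then (using $\C_2 \leqW \aouc$) we would obtain $\aouc \times \C_2 \leqW \aouc^2 \leqW \aouc^* \leqW \BRoot$, contradicting Proposition \ref{prop:noparallel}.

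\emph{Main obstacle.} None of the four steps is deep, each being a corollary of the surrounding results. The two points that need a moment's care are: (i) that $\aouc \leqW \lpo$, which is what lets the barrier $\C_3 \nleqW \lpo$ be brought to bear on strictness of the lower bound; and (ii) that in the upper reduction the number of parallel copies of $\aouc$ may legitimately be chosen as a function of the input -- which it may, because the degree bound is part of the name of the polynomial and $\aouc^*$ is a $\bigsqcup$-type degree.
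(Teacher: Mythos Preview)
Your proposal is correct and follows essentially the same route as the paper's proof: both arguments derive the lower bound from $\aouc \equivW \BRoot_{\leq 1}$, its strictness from $\C_3 \leqW \BRoot$ via Proposition~\ref{prop:leroux46} together with $\aouc \leqW \lpo$ and $\C_3 \nleqW \lpo$, the upper bound from Proposition~\ref{prop4} combined with $\C_{k+1} \leqW \C_2^k$ and $\C_2 \leqW \aouc$, and its strictness from Proposition~\ref{prop:noparallel}. You merely spell out two points the paper leaves implicit---the reason $\aouc \leqW \lpo$ holds, and the fact that the number of parallel $\aouc$-calls may depend on the degree bound read from the input---but neither constitutes a different approach.
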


\subsection{Multivariate Polynomials and Cylindrical Algebraic Decomposition}
\label{subsec:multivariate}
Our focus now shifts towards multivariate polynomials. Within this domain, two intriguing problems arise: the pursuit of a root for an individual multivariate polynomial and the search for a solution to a system of polynomial inequalities (restricting ourselves to the non-strict case for now). In both scenarios, our emphasis rests on the bounded case, given its relevance to Nash equilibria. Additionally, the unbounded case promptly leads us to $\lpo^*$.

\begin{definition}
Let $\BMRoot : \mathbb{R}[X^*] \mto \uint^*$ map real polynomials to a root in $\uint^*$, provided there is one; and to an arbitrary $x \in \uint^*$ otherwise.
\end{definition}

Our definition of $\BMRoot$ extends to all polynomials, and when confronted with a lack of roots within the unit hypercube, it provides an arbitrary element from the unit hypercube as an output. The latter scenario bears lesser significance, Lemma \ref{lemma:makezero} detailed below shows that restricting $\BMRoot$ to polynomials having roots in the unit hypercube does not change its Weihrauch degree.

\begin{definition}
Let $\mathrm{BPIneq} : (\mathbb{R}[X^*])^* \mto \uint^*$ map a sequence of polynomials $P_1,\ldots,P_k$ to a point $x \in \uint^*$ such that $\forall i \leq k, \ P_i(x) \geq 0$ if such a point exists, and to any $x \in \uint^*$ otherwise.
\end{definition}

\begin{lemma}
\label{lemma:makezero}
There is a computable multivalued map $\mathrm{MakeZero} : \mathbb{R}[X^*] \mto \mathbb{R}[X^*]$ such that whenever $g \in \mathrm{MakeZero}(f)$, then $g$ has a zero in the unit hypercube; and if $f$ already had a zero in the unit hypercube, then $f = g$.
\end{lemma}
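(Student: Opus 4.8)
The plan is to realise $\mathrm{MakeZero}$ as a single computable (single-valued) map, so that the multivaluedness in the statement is only a convenience. Given $f \in \mathbb{R}[X^*]$ referring to the finite variable set $V$, write $K := \uint^V$ for the corresponding closed unit hypercube, and compute the two reals $m := \min_{x \in K} f(x)$ and $M := \max_{x \in K} f(x)$. Let $v := \max\bigl(m,\min(0,M)\bigr)$ be the point of $[m,M]$ closest to $0$ (equivalently, the median of $m$, $0$, $M$), and output the polynomial $g := f - v$, i.e.\ the polynomial with the same degree bound and variable set as $f$ in which the constant coefficient $a_0$ of $f$ is replaced by $a_0 - v$.

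First I would check that this is computable, uniformly in $f$. The hypercube $K$ is computably compact, being a finite product of copies of $\uint$, and by the coPolish representation of $\mathbb{R}[X^*]$ — a degree bound together with the tuple of all relevant coefficients — evaluation $\mathbb{R}[X^*] \times K \to \mathbb{R}$ is computable (see \cite[Section 3]{pauly-steinberg}). Hence $M$ is a computable real: $\{q \in \mathbb{Q} : q < M\}$ is recursively enumerable because $q < M$ iff $f(x) > q$ for some rational point $x \in K$, and $\{q \in \mathbb{Q} : q > M\}$ is recursively enumerable because $q > M$ iff $f(x) < q$ for every $x \in K$, i.e.\ $K \subseteq f^{-1}\bigl((-\infty,q)\bigr)$, which is semidecidable by computable compactness of $K$; the argument for $m$ is symmetric. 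Since $\min$, $\max$ and subtraction on $\mathbb{R}$ are computable, $v$ and therefore $g$ are computable from $f$.

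Next I would verify the two defining properties. As $m \leq M$, in every case $v \in [m,M]$, and $v = 0$ precisely when $m \leq 0 \leq M$. For the first property, $\min_{x \in K} g(x) = m - v \leq 0$ while $\max_{x \in K} g(x) = M - v \geq 0$, so — $K$ being connected and $g$ continuous — the intermediate value theorem yields a point of $K$ at which $g$ vanishes. For the second property, if $f$ has a zero in $K$ then $m \leq 0 \leq M$, hence $v = 0$ and $g = f$.

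All of this is routine; the only step deserving a little care is the uniform computability of the extrema $m$ and $M$ from a name of $f$, which rests on the fact that a degree bound plus real coefficients suffices to evaluate $f$ computably as a function on $K$, as recorded in \cite[Section 3]{pauly-steinberg}. The remaining ingredients — computable compactness and connectedness of the unit hypercube, and the elementary behaviour of $v$ — present no obstacle, so I do not anticipate any genuinely hard step.
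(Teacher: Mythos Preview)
Your proof is correct and follows the same underlying idea as the paper's: compute a real $v$ with $v \in [m,M]$ and $v = 0$ whenever $0 \in [m,M]$, and output $f - v$. The paper arrives at the very same constant by a different route: it computes $c := \min_{x \in K} |f(x)|$, then uses Plotkin's $\mathbb{T}$ together with the $\operatorname{Merge}$ operation to attach the correct sign, setting $\overline{c} = \operatorname{Merge}(\operatorname{sign}(f(0^n)), c, -c)$; one checks case by case that $\overline{c} = v$. Your median formula $v = \max\bigl(m,\min(0,M)\bigr)$ is slightly more elementary in that it avoids the $\mathbb{T}/\operatorname{Merge}$ apparatus and the appeal to computable admissibility implicit in it, at the cost of computing two extrema of $f$ rather than one extremum of $|f|$ plus a sign test. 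Either way the content is the same: shift $f$ horizontally by the least amount needed to give it a zero in the cube.
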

\begin{proof}
Given an $n$-variate polynomial $f$, we can compute $c := \min \{|f(x)| \mid x \in \uint^n\}$ since the unit hypercube is computably compact and computably overt. Let $\mathbb{T}$ be Plotkin's $T$, i.e.~the space with the truth values $0$, $1$ and undefined $(\bot)$. We can compute $\operatorname{sign}(f(0^n)) \in \mathbb{T}$ (where $\operatorname{sign}(1) = 1$, $\operatorname{-1} = 0$ and we consider the sign of $0$ to be undefined).

The operation $\operatorname{Merge} : \subseteq \mathbb{T} \times \mathbf{X} \times \mathbf{X} \to \mathbf{X}$ is defined on all inputs except $(\bot,x,y)$ for $x \neq y$, and satisfies $\operatorname{Merge}(0,x,y) = x$, $\operatorname{Merge}(1,x,y) = y$ and $\operatorname{Merge}(\bot,x,y) = x = y$. It is easy to see that $\operatorname{Merge}$ is computable for $\mathbf{X} = \mathbb{R}$. We use it to compute $\overline{c} = \operatorname{Merge}(\operatorname{sign}(f(0^n)), c, -c)$ and find that $f + \overline{c}$ meets the requirements to be an output to $\mathrm{MakeZero}(f)$. Essentially, we just shift $f$ vertically by the minimal amount required to make it have a zero in the unit hypercube.
\end{proof}

\begin{proposition}
\label{prop:bmrootidempotent}
$\BMRoot \equivW \BMRoot^*$.
\end{proposition}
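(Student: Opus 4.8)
The plan is to show the nontrivial direction $\BMRoot^* \leqW \BMRoot$ (the converse being immediate, since $f \leqW f^*$ always). Given a finite list of multivariate polynomials $f_1,\dots,f_k$, each in its own tuple of variables, the natural strategy is to combine them into a single polynomial whose roots in a hypercube encode a root of each $f_j$ that actually has one. First I would apply $\mathrm{MakeZero}$ from Lemma~\ref{lemma:makezero} to each $f_j$, obtaining $g_j$ which is guaranteed to have a zero in its unit hypercube, and which equals $f_j$ whenever $f_j$ already had one. So without loss of generality we may assume every polynomial in the list has a zero in the relevant unit hypercube.

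Next I would merge the $g_j$ into one polynomial on the product of all the variable sets (renaming variables so the blocks are disjoint): a standard trick is to take $G = \sum_{j=1}^k g_j(\overline{x}^{(j)})^2$, so that $G \geq 0$ everywhere on the combined unit hypercube, and $G(\overline{x}^{(1)},\dots,\overline{x}^{(k)}) = 0$ if and only if each $g_j(\overline{x}^{(j)}) = 0$. Since each $g_j$ has a zero in its hypercube, $G$ has a zero in the combined hypercube, so $\BMRoot$ applied to $G$ returns a point $(\overline{x}^{(1)},\dots,\overline{x}^{(k)})$ with $G = 0$, hence with $g_j(\overline{x}^{(j)}) = 0$ for every $j$. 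Projecting onto the $j$-th block of coordinates gives a zero of $g_j$, which is a zero of $f_j$ precisely when $f_j$ had one in its hypercube — exactly what $\BMRoot^*$ is required to output on the $j$-th component. Reading off and recombining the projections is computable, so this gives a valid Weihrauch reduction using a single oracle call.

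One point to check carefully is the bookkeeping for the represented space $\uint^*$ and $\mathbb{R}[X^*]$: the number $k$ of polynomials and the variable count of each are given as part of the input to $\BMRoot^*$, so we can compute the dimension of the combined hypercube and the polynomial $G$ uniformly; and the output of $\BMRoot$ on $G$ lives in $\uint^m$ for the known total dimension $m$, from which the componentwise projections are extracted by a computable map into $\uint^*$. The only mild subtlety is that $\BMRoot$ is allowed to return an arbitrary point when its input has no root, but we have arranged via $\mathrm{MakeZero}$ that $G$ always does have a root, so this degenerate branch never arises. The main (very minor) obstacle is simply handling the variable-renaming and the sum-of-squares construction so that it is genuinely computable on the coPolish representation of $\mathbb{R}[X^*]$ — there is no real difficulty here, only the need to be explicit that squaring and summing polynomials and tracking their variable sets are computable operations, which they are.
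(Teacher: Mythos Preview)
Your proposal is correct and takes essentially the same approach as the paper: rename variables to be disjoint, use Lemma~\ref{lemma:makezero} to ensure each polynomial has a zero in its hypercube, then form the sum of squares and apply $\BMRoot$ once. The only cosmetic difference is that the paper reduces to the case $k=2$ (showing $\BMRoot^2 \leqW \BMRoot$ and implicitly relying on uniformity in $k$), whereas you handle all $k$ at once; your version is arguably cleaner in this respect.
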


\begin{proof}
It suffices to show that $\BMRoot^2: \mathbb{R}[X^*] \times \mathbb{R}[X^*] \mto \uint^* \times \uint^*$ is reducible to the map $\BMRoot: \mathbb{R}[X^*]  \mto \uint^*$. We are given two multivariate polynomials $P$ and $Q$. We rename variables, in order to ensure that each polynomial uses different variables. By Lemma \ref{lemma:makezero} we can assume w.l.o.g.~that $P$ and $Q$ each have a root in the unit hypercube.

We then apply $\BMRoot$ to $P(\overline{x})^2 + Q(\overline{y})^2$ and obtain a root $(\overline{x}_0,\overline{y}_0)$ of the latter polynomial. Now, $\overline{x}_0$ is a root of $P$ and $\overline{y}_0$ is a root of $Q$.

\end{proof}

%\begin{proposition}
%$\mathrm{BPIneq} \equivW \mathrm{BPIneq}^*$
%\end{proposition}
%\begin{proof}
%Next, we need to show $\mathrm{BPIneq}^2: %(\mathbb{R}[X^*])^* \times (\mathbb{R}[X^*])^* \mto \uint^* \times \uint^*$ is reducible to $\mathrm{BPIneq} : (\mathbb{R}[X^*])^* \mto \uint^*$. We can combine two lists of polynomials simply, resulting in a longer list which would have the same properties as the originals. 

%\end{proof}

We will conclude that $\aouc^\diamond$ is an upper bound for the Weihrauch degree of $\mathrm{BPineq}$ (and thus $\BMRoot$) as a corollary of the main theorem of this subsection, Theorem \ref{theo:cad}. The way we obtain Theorem \ref{theo:cad} is via an analysis of how constructive CAD is if we do not take the equality test on the reals for granted.

To tackle the task of finding the number of roots in multivariate polynomials, even in the absence of prior knowledge about the degrees of these polynomials, we implement the CAD algorithm. This approach enables us to systematically deconstruct multivariate polynomials into lower-variate forms. This can be repeated until we have a set of univariate polynomials. Subsequently, we can find the roots of these univariate polynomials in the same way as Section \ref{subsec:univariate}. This establishes a foundation for determining the set of solutions within a system of polynomial equations and inequalities. The subsequent step involves `lifting' these univariate polynomials to the original multivariate setting to pinpoint their roots within the context of the overarching problem. 

The CAD algorithm has three phases; projection, base, and extension. This algorithm is driven by an input, represented as a set $\F$ of $n$-variate polynomials. In the projection phase, a sequence of $n-1$ steps is executed, each resulting in the creation of new polynomial sets. The zero set of the resulting polynomials consists of the projection of the significant points. The base phase isolates the real roots of the univariate polynomials from the outputs of the projection phase. Each root and one point in the intervals between roots are then used as sample points in the decomposition of $\R^1$. The extension phase constructs sample points for all regions of the CAD of $\R^n$. This phase also consists of $n-1$ steps which takes the sample points from the base phase and `lifts' them into $\R^2$ for each region in the stack. This is then repeated until we have sample points to all regions of the CAD of $\R^n$.

In more detail, the procedure involves the upward `lifting' of univariate polynomials from $\mathbb{R}^{i-1}$ to $\mathbb{R}^i$. This elevation is achieved by evaluating the polynomials in $\proj^{n-i+1}(\mathcal{F})$ over a sample point $\alpha$. This results in a set of univariate polynomials in $x_i$ corresponding to the values of $\proj^{n-i+1}(\mathcal{F})$ on the `vertical' line $x_{i-1} = \alpha$. These univariate polynomials are treated the same in each `lifting' phase until they reach $\mathbb{R}^n$.

%The CAD algorithm uses the concept of regions, cylinders, decomposition's and stacks. In essence, this allows us to split any polynomial into regions which can be decomposed into sections and sectors. For example, the polynomial $y=x^2$ would have one section for the root at (0,0) and two sectors $x>0,~x<0$ in a decomposition of $\mathbb{R}^1$. This intuitively shows each region is a connected subset of $\mathbb{R}^n$. When we have three or more variables, these regions are stacked within a cylinder in such a way that there are no overlaps between sections or sectors.

\begin{definition}\label{def:sec}
\begin{itemize}
\item A region $\mathcal{R}$ is a connected subset of $\mathbb{R}^n$.
\item The set $Z(\mathcal{R}) = \mathcal{R} \times \mathbb{R} = \{(\alpha,x) \mid \alpha \in \mathcal{R}, x \in \mathbb{R}\}$ is called a cylinder over $\mathcal{R}$.
\item Let $f, f_1, f_2$ be continuous, real-valued functions on $\mathcal{R}$. An $f$-section of $Z(\mathcal{R})$ is the set $\{(\alpha,f(\alpha)) \mid \alpha \in \mathcal{R}\}$ and an $(f_1,f_2)$-sector of $Z(\mathcal{R})$ is the set $\{(\alpha,\beta) \mid \alpha \in \mathcal{R}, f_1(\alpha) < \beta < f_2(\alpha)\}$.
\end{itemize}
\end{definition}

Within the context of CAD, the regions from $\mathcal{R}$ that make an appearance signify the locations where the $n+1$-variate polynomial possesses a root (with the first $n$-variables ranging over $\mathcal{R}$). Specifically, the first $n$ variables range over $\mathcal{R}$ in this process. Concurrently, the sectors encapsulate the intermediary segments where the polynomial remains consistently positive or negative. This contributes to a decomposition, which entails the fragmentation of a given region into smaller, distinct components. Therefore, a decomposition of $\mathbb{R}$ is a set of regions, defined above as sectors and sections. 

\begin{definition}\label{def:decomp}
Let $\mathcal{R} \subseteq \mathbb{R}^n$. A decomposition of $\mathcal{R}$ is a finite collection of disjoint regions (components) whose union is $\mathcal{R}$: 
$\mathcal{R} = \bigcup^k_{i=1} \mathcal{R}_i, ~ \mathcal{R}_i \cap \mathcal{R}_j = \emptyset$ whenever $i \neq j$.

A stack over $\mathcal{R}$ is a decomposition of $\mathcal{R} \times \mathbb{R}$ comprising a combination of $f_i$-sections and $(f_i,f_{i+1})$-sectors, where $f_0 < \dots < f_{k+1}$ for all $x \in \mathcal{R}$ and $f_0 = -\infty, f_{k+1} = +\infty$. 

The stack decomposition of $\mathbb{R}^0 = \{0\}$ is $\{\{0\}\}$. A stack decomposition of $\mathbb{R}^{n+1}$ is a decomposition of the form $\bigcup_{\mathcal{R} \in \mathfrak{D}} S_\mathcal{R}$, where each $S_\mathcal{R}$ is a stack over $\mathcal{R}$, and $\mathfrak{D}$ is a stack decomposition of $\mathbb{R}^n$.
\end{definition}

The initial phase revolves around projecting polynomials from $n$ variables to a set in $n-1$ variables. Within this process, a real polynomial $f_i \in \mathbb{R}[x_1,\dots,x_{n-1}][x_n]$ in $n$-variables can be deconstructed into the coefficients for every power of $x$: $ f_i(x_1,\dots,x_{n-1},x_n) = f^{d_i}_i(x_1,\dots,x_{n-1})x^{d_i}_n +\dots + f^0_i(x_1,\dots,x_{n-1})$, where $d_i$ signifies the degree of the polynomial.

\begin{definition} \label{def:red}
The \emph{reductum}, $\hat{f}^{k_i}_i$ of a polynomial is
$$ \hat{f}^{k_i}_i(x_1,\dots,x_{n-1},x_n)=f^{k_i}_i(x_1,\dots,x_{n-1})x^{k_i}_n + \dots + f^0_i(x_1,\dots,x_{n-1})$$
where $0 \leq k_i \leq d_i$.
\end{definition}

\begin{definition} \label{def:psc}
Let $f,g \in \mathbb{R}[x]$ and $\textrm{deg}(f) = m, \textrm{deg}(g) = n, m \geq n$. The $k^{\textrm{th}}$ \emph{principal subresultant coefficient} of f and g is $$ \psc_k(f,g)=\textrm{det}(\mathrm{M}_k),~0\leq k\leq n$$
where $\mathrm{M}_0$ is the Sylvester matrix of f and g, and then $\mathrm{M}_k$ is obtained by deleting certain rows and columns from $\mathrm{M}_0$.
\end{definition}

The concrete definitions of the $\mathrm{M}_k$ matrices are not of direct relevance to our current discussion, but they are discussed in depth in \cite[Example 4.4]{Jirstrand}.

\begin{lemma}\label{lem:comp}
\begin{enumerate}
\item{The reductum of polynomials is computable.}
\item{The derivative of a polynomial is computable.}
\item{Given two polynomials $p,q$, we can compute a finite tuple of polynomials such that every well-defined $\psc_k(p,q)$ appears within the tuple.}
\end{enumerate}
\end{lemma}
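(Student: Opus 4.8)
The plan is to handle the three items in order of increasing delicacy, exploiting that all three operations are ultimately polynomial (hence continuous) in the coefficients of the input(s), together with the conventions already established for representing polynomials via a degree bound plus a real coefficient tuple. For item 1, the reductum $\hat f^{k_i}_i$ is obtained from $f_i$ simply by reading off the coefficients $f^0_i,\dots,f^{k_i}_i$ of the powers $x_n^0,\dots,x_n^{k_i}$ and discarding (i.e.\ setting to $0$) the higher ones. Since our representation of a polynomial already hands us a tuple of real coefficients together with a degree bound, forming this truncated tuple is a trivial rearrangement of the input name and padding with zeros; no real-number comparison is needed because $k_i$ is a natural-number parameter, not something we have to detect. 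The same remark covers item 2: the formal derivative $\textrm D_{x_n} f$ has coefficients $j\cdot f^j_i$ for $j\ge 1$, and multiplication of a real by a fixed integer, together with reindexing the tuple, is plainly computable on the standard representation of $\mathbb R$; the degree bound decreases by one (or stays, under the coPolish convention, which is also fine).

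For item 3 I would unwind Definition \ref{def:psc}. The principal subresultant coefficients $\psc_k(f,g)$ are determinants of the matrices $\mathrm M^{(k)}_k$, which by the remark following the definition are obtained from the Sylvester-type matrix $\mathrm M_0$ of $f$ and $g$ by deleting prescribed rows and columns. The entries of $\mathrm M_0$ are either coefficients of $f$ and $g$ or zeros, so each entry is a computable real given the input names; the determinant is a fixed polynomial in those entries, hence computable; and therefore each individual $\psc_k(f,g)$ is a computable real once we know which rows/columns to delete. The only subtlety — and the reason the statement says ``every \emph{well defined} psc appears within the tuple'' rather than ``we compute the tuple of all pscs'' — is that the recipe for \emph{which} matrix to take, and indeed which $k$ range is meaningful, depends on $\deg(f)=m$ and $\deg(g)=n$ with $m\ge n$, and these exact degrees are exactly what our representation does \emph{not} give us: we only have upper bounds $d_f, d_g$ on the degrees. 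This is the main obstacle, and it is genuine rather than cosmetic.

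To overcome it I would do the obvious over-approximation, in the same spirit as \textsc{MakeZero} and the ``potential polynomial'' constructions earlier in the paper: instead of committing to the true degrees, enumerate \emph{all} pairs $(m', n')$ with $0\le n'\le m'$ and $m'\le d_f$, $n'\le d_g$ (and also the symmetric case, swapping the roles of $f$ and $g$, to cover $\deg(g)\ge\deg(f)$), and for each such hypothetical degree pair form the corresponding Sylvester matrix using the first $m'{+}1$ and $n'{+}1$ coefficient slots, delete the rows and columns dictated by Definition \ref{def:psc}, and output the determinant of each resulting $\mathrm M^{(k)}_{k}$ for $0\le k\le n'$. This produces a finite tuple of computable reals — finitely many, since $d_f$ and $d_g$ are given — and whatever the actual degrees of the inputs turn out to be, the genuinely meaningful $\psc_k(f,g)$ are among the entries produced by the correct $(m',n')$-branch (when leading coefficients of $f$ or $g$ vanish the ``extra'' branches simply compute determinants of matrices with some zero rows, i.e.\ harmless spurious values). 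Hence every well-defined psc appears in the list, which is all the lemma asks for; the list may contain additional junk, but — just as with the candidate roots produced in Lemma \ref{lem:Aouc} — that is exactly what the later CAD analysis is set up to tolerate.
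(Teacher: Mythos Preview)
Your proposal is correct and follows essentially the same approach as the paper: items 1 and 2 are dispatched as trivial coefficient manipulations requiring no equality tests, and item 3 is handled by enumerating all finitely many hypothetical degree pairs within the given bounds and computing the Sylvester-type determinants for each. Your write-up is in fact more detailed than the paper's own proof, which simply notes that there are $(n{+}1)\times(m{+}1)$ combinations to run through.
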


\begin{proof}
\begin{enumerate}
\item{Calculating the reductum requires the rearrangement of the polynomial $\hat{f}^{k_i}_i$, in order to group all of the coefficients for every power of $x_n$, $x^{k_i}_n,...,x^0_n$. This is trivially computable, as no tests on the coefficients need to be performed.}
\item{Calculating the derivative of a polynomial merely requires the multiplication of coefficients with natural numbers.}
\item{As we do not have access to the exact degrees of the polynomials, but merely to some upper bound, we do not even know which of the psc's are well-defined. Let $n$ be the upper bound of $\textrm{deg}(p)$ and $m$ be the upper bound of $\textrm{deg}(q)$. We have a finite potential number of pairs of polynomials, $(n+1) \times (m+1)$ combinations. These can be used to calculate the $\psc$, under the assumption that the current degree of the pair of polynomials are the correct degrees.}
\end{enumerate}
\end{proof}

We point out that the use of an ``overapproximation'' in Lemma~\ref{lem:comp} (3) is unavoidable. If we knew how many principal subresultant coefficients there are for $f$ and $g$, we would know the degree of $g$.

%Given a set of polynomials $\F \subset \mathbb{R}[x_1,\dots,x_{n-1}][x_n]$ we can compute another set of ($n-1$)-variate polynomials $\proj(\F) \subset \mathbb{R}[x_1,\dots,x_{n-1}]$, which characterises the maximal connected $\F$-delineable sets of $\mathbb{R}^{n-1}$. The projection can be summarised by 
%$$\F_0 = \F \subset \mathbb{R}[x_1,\dots, x_{n-1}, x_n], \F_1 \subset \mathbb{R}[x_1,\dots,x_{n-1}], \dots, \F_{n-1} \subset \mathbb{R}[x_1]$$

\begin{definition} \label{def:proj}
Let $\F=\{f_1,f_2,\dots,f_r\}$ be a finite set of $n$-variate polynomials. Its projection $\proj(\F) =\proj_1(\F) \cup \proj_2(\F) \cup \proj_3(\F)$ is a finite set of $n-1$-variate polynomials, where 
\begin{align*}
\proj_1 & =\{f^k_i(x_1,\dots,x_{n-1}) \mid 1 \leq i \leq r, 0 \leq k \leq d_i\} \\
\proj_2 & =\{\psc^{x_n}_l (\hat{f}^k_i(x_1,\dots,x_{n-1}), 
\textrm{D}_{x_n}(\hat{f}^k_i(x_1,\dots,x_{n-1}))) \mid 1\leq i \leq r, 0 \leq l <k\leq d_i -1\} \\
\proj_3 &= \{\psc^{x_n}_m(\hat{f}^{k_i}_i(x_1,\dots,x_{n-1}),\hat{f}^{k_j}_j(x_1,\dots,x_{n-1})) \mid \\ & \quad 1\leq i<j\leq r, 0\leq m\leq k_i\leq d_i, 0\leq m\leq k_j \leq d_j\}.
\end{align*} 
Here $\psc^{x_n}_i$ denotes the $i^{th}$ principle resultant coefficient w.r.t $x_n$ and $\textrm{D}_{x_n}$ is the formal derivative operator with respect to $x_n$.
\end{definition}

\begin{corollary}\label{corr:proj}
From a finite tuple of polynomials $p_1,p_2,\ldots,p_k$, we can compute a finite tuple of polynomials $q_1,q_2,\ldots,q_\ell$ such that $\proj(\{p_i \mid i \leq k\}) \subseteq \{q_j \mid j \leq \ell\}$.
\end{corollary}

\begin{proof}
By Lemma \ref{lem:comp}.
\end{proof}

\begin{definition} \label{def:del}
Let $\F = \{f_1,f_2,\dots, f_r\} \subset \mathbb{R}[x_1,\dots,x_{n-1}][x_n]$ be a set of multivariate real polynomials and $\mathcal{R} \subseteq \mathbb{R}^{n-1}$ be a region. We say that $\F$ is delineable on $\mathcal{R}$ if it satisfies the following invariance properties:
\begin{enumerate}
\item{For every $1\leq i \leq r$, the total number of complex roots of $f_{i}(y)$ remains invariant as $y$ varies over $\mathcal{R}$.}
\item{For every $1 \leq i \leq r$, the number of distinct complex roots of $f_{i}(y)$ remains invariant as $y$ varies over $\mathcal{R}$.}
\item{For every $1 \leq i \leq j \leq r$, the total number of common complex roots of $f_{i}(y)$ and $f_{j}(y)$ remains invariant as $y$ varies over $\mathcal{R}$.}
\end{enumerate}
\end{definition}

\begin{definition}
For a list of $n$-variate polynomials $f_1,f_2,\dots,f_r$ and $x \in \mathbb{R}^n$ let $\operatorname{sign}(f_1,f_2,\dots,f_r,x) \in \{0,+,-\}^r$ be defined by $\operatorname{sign}(f_1,f_2,\dots,f_r,x)(j) = \mathalpha{+}$ if $f_j(x)>0$, $\operatorname{sign}(f_1,f_2,\dots,f_r,x)(j) = \mathalpha{-}$  if $f_j(x)<0$ and $\operatorname{sign}(f_1,f_2,\dots,f_r,x)(j) = 0$  if $f_j(x)=0$.
\end{definition}

If a set of polynomials is delineable over a region $\mathcal{R}$, then the sign vector remains invariant over $\mathcal{R}$, \cite[Lemma 1]{Jovanovic}.

\begin{lemma} \label{lem:del}
Let $\mathcal{F}  = \{f_1,\ldots,f_r\} \subset \mathbb{R}[x_1,\dots,x_{n-1}][x_n]$ be a set of polynomials, and let $proj(\mathcal{F})  = \{q_1,\ldots,q_r\} \subset \mathbb{R}[x_1,\dots,x_{n-1}]$ be the set of its projections. For any $b \in \{0,+,-\}^r$ we find that $\mathcal{F} $ is delineable on $R_b := \{x \mid \operatorname{sign}(q_1,\ldots,q_r,x) = b\}$.
\end{lemma}

\begin{proof}
Following \cite[Proof to Theorem 4.1]{Jirstrand} we show that the three properties (total number of complex roots, number of distinct complex roots, and number common complex roots) required to be invariant in Definition \ref{def:del} can be expressed by referring to the signs of polynomials belonging to the projections.

\begin{enumerate}
    \item{ Total number of complex roots of $f_i(y)$ remains invariant over $\mathcal{R}$. This is expressed by
    \begin{equation*}
        (\exists 0 \leq k_i \leq d_i)
        \big[(\forall k > k_i)[f_i^k(x_1,\dots,x_{n-1})=0]\wedge f_i^{k_i}(x_1,\dots,x_{n-1}) \neq 0 \big]
    \end{equation*}
    holding for all $y \in \mathcal{R}$.}
    \item{``The number of distinct complex roots of $f_i(y)$ remains invariant over $\mathcal{R}$.'' is equivalent to:
    \begin{align*}
        (\exists 0 < k_i \leq d_i)&(\exists 0 \leq l_i \leq k_i -1) \\
        &\big[ (\forall k > k_i)[f_i^k(x_1,\dots,x_{n-1})=0] \wedge f_i^{k_i}(x_1,\dots,x_{n-1})\neq 0 ~\wedge \\
        &(\forall l < l_i)[\psc_l^{x_n}(\hat{f}_i^{k_i}(x_1,\dots,x_{n-1}),\textrm{D}_{x_n}(\hat{f}_l^{k_i}(x_1,\dots,x_n)))=0]~\wedge \\
        &\psc_{l_i}^{x_n}(\hat{f}_i^{k_i}(x_1,\dots,x_n),\textrm{D}_{x_n}(\hat{f}_i^{k_i}(x_1,\dots,x_n)))\neq 0 \big]
    \end{align*}
    holding for all $y \in \mathcal{R}$.}
    \item{``The total number of common complex roots of $f_i(y)$ and $f_j(y)$ remains invariant over $\mathcal{R}$.'' is equivalent to:
    \begin{align*}
        (\exists 0 < k_i \leq d_i)&(\exists 0 < k_j \leq d_j)(\exists 0 \leq m_{i,j} \leq \textrm{min}(d_i,d_j)) \\
        &\big[ (\forall k > k_i)[f_i^k(x_1,\dots,x_{n-1})=0]\wedge f_i^{k_i}(x_1,\dots,x_{n-1})\neq 0~ \wedge \\
        & (\forall k>k_j)[f_j^k(x_1,\dots,x_{n-1})=0]\wedge f_j^{k_j}(x_1,\dots,x_{n-1})\neq 0~ \wedge \\
        & (\forall m < m_{i,j})[\psc_m^{x_n}(\hat{f}_i^{k_i}(x_1,\dots,x_n),\hat{f}_j^{k_j}(x_1,\dots,x_n))=0] ~\wedge \\
        &[\psc_{m_{i,j}}^{x_n}(\hat{f}_i^{k_i}(x_1,\dots,x_n),\hat{f}_j^{k_j}(x_1,\dots,x_n))\neq 0]\big]
    \end{align*}
    holding for all $y \in \mathcal{R}$.}
\end{enumerate}

\end{proof}

%Using Definition \ref{def:del} the projections have invariant properties. For $\proj_1(\F)$ this implies that the degree w.r.t $x_n$ of the polynomials is constant over $C$ if $\proj(\F)$ is invariant over $C \subset \R^{n-1}$ and hence the number of roots of each polynomial is constant over $C$. For $\proj_2(\F)$ it implies that the gcd of each polynomial and its derivative has a constant degree (\cite[Lemma 4.4]{Jirstrand}, \cite[Corollary 7.7.9]{mishra_1993}). This in combination with $\proj_1(\F)$ shows that the number of distinct zeros of each polynomial in $\F$ is constant \cite[Lemma 4.2]{Jirstrand}. 

%When $\proj_3(\F)$ is considered along with $\proj_1(\F)$ the invariant property implies that the number of common zeros of each pair of polynomials in $\F$ is constant \cite[Lemma 4.4]{Jirstrand}.  The set $\proj(\F)$ characterises the sets over which are a constant number of real zeros of the polynomial in $\F$ (\cite[Lemma 4.5]{Jirstrand}, \cite[Lemma 8.6.3]{mishra_1993}).

\begin{lemma}\label{lem:dec}
For each finite set $\F$ of $n$-variate polynomials there exists a sign invariant stack decomposition of $\mathbb{R}^n$.
\end{lemma}
\begin{proof}
%The baseline, $\mathbb{R}^1$, is connected and can easily be split into sections and sectors (Definition \ref{def:sec}) creating a sign invariant decomposition. We consider all real root functions and consider the stacks defined over them. These stacks are also a decomposition (Definition \ref{def:decomp}) into sections and sectors, where the sections are defined by the roots of the polynomial and the sectors are the intervals in between them. The sectors and sections within this stack decomposition can then be treated the same way, and raised to the next dimension. This repeats until it reaches $\mathbb{R}^n$.

The case $n = 1$ is immediate; we simply partition $\mathbb{R}$ into the roots of the polynomials and the open intervals determined by them. 

Otherwise, in the base phase of the CAD algorithm we repeatedly apply the projection operator $n-1$-times. We then obtain a decomposition of $\mathbb{R}^1$ which is sign-invariant for $\proj^{n-1}(\F)$.

We can extend a stack decomposition $\mathcal{D}_{i-1}$ of $\mathbb{R}^{i-1}$ which is sign invariant for $\proj^{n-i+1}(\F)$ to a stack decomposition $\mathcal{D}_i$ of $\mathbb{R}^i$ which is sign invariant for $\proj^{n-i}(\F)$:
By Lemma \ref{lem:del} $\proj^{n-i+1}(\F)$ is delineable over each region of $\mathcal{D}_{i-1}$ and hence the real roots of $\proj^{n-i+1}(\F)$ vary continuously over each region of $\mathcal{D}_{i-1}$, while maintaining their order (cf.~\cite[Corollary 8.6.5]{mishra_1993}).
\end{proof}

We can now define what we seek to compute:

\begin{definition}
A representative sample for a finite set $\F$ of $n$-variate polynomials is a finite set of points $X$, such that for every non-empty region $\mathcal{R}$ of the sign invariant stack decomposition provided by Lemma \ref{lem:dec} there exists some $x \in \mathcal{R} \cap X$. 
\end{definition}

\begin{lemma}\label{lem:rep}
Let $X$ be a representative sample for $\proj(f_1,f_2,\dots,f_r)$. Then there is a representative sample $X'$ for $f_1,f_2,\dots,f_r$ such that $X = \{(x_1, \dots, x_{n-1}) \mid \exists x_n . (x_1, \dots, x_{n-1}, x_n) \in X' \}$. Moreover, $X'$ can be obtained as follows: For each $\overline{x} \in X$, let $S_{\overline{x}}$ be a representative sample for the univariate polynomials $f_1(\overline{x}),\ldots,f_r(\overline{x})$, and then let $X' = \{(\overline{x},x_n) \mid \overline{x} \in X \wedge x_n \in S_{\overline{x}}\}$.
\end{lemma}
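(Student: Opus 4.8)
The plan is to exploit the recursive structure of the sign-invariant stack decomposition from Lemma~\ref{lem:dec}. Writing $\mathcal{D}_{n}$ for the decomposition of $\mathbb{R}^n$ associated to $f_1,\dots,f_r$ and $\mathcal{D}_{n-1}$ for the one of $\mathbb{R}^{n-1}$ associated to $\proj(f_1,\dots,f_r)$, every region $\mathcal{R}'$ of $\mathcal{D}_n$ lies over a unique region $\mathcal{R}$ of $\mathcal{D}_{n-1}$, and $\mathcal{R}'$ is either a section (graph of one of the continuous, non-crossing real root functions of $f_1,\dots,f_r$ over $\mathcal{R}$) or a sector (an open band between two consecutive such graphs). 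The key observation I would record first is: for a point $\overline{x}\in\mathcal{R}$, delineability of $f_1,\dots,f_r$ over $\mathcal{R}$ (Definition~\ref{def:del}, Lemma~\ref{lem:del}) implies that restricting the stack over $\mathcal{R}$ to the vertical line $\{\overline{x}\}\times\mathbb{R}$ yields \emph{exactly} the one-dimensional sign-invariant decomposition determined by the univariate polynomials $f_1(\overline{x}),\dots,f_r(\overline{x})$ — sections restrict to the real roots, sectors to the open gaps between consecutive roots, and nothing is created or destroyed since the number of real roots is invariant over $\mathcal{R}$.

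Granting this, I would finish as follows. The set $X'=\bigcup_{\overline{x}\in X}\{\overline{x}\}\times S_{\overline{x}}$ is finite, and each $S_{\overline{x}}$ is non-empty (the unit interval is covered by the one-dimensional decomposition above, so at least one of its regions meets $\uint$, and a representative sample must contain a point of it); conversely every point of $X'$ has its first $n-1$ coordinates in $X$ by construction. This gives the ``moreover'' clause, $X=\{(x_1,\dots,x_{n-1})\mid \exists x_n\,.\,(x_1,\dots,x_{n-1},x_n)\in X'\}$. For the main claim, take any region $\mathcal{R}'$ of $\mathcal{D}_n$ with a witness $(\overline{y},y_n)\in\mathcal{R}'\cap\uint^n$, and let $\mathcal{R}$ be the region of $\mathcal{D}_{n-1}$ below it. Then $\overline{y}\in\mathcal{R}\cap\uint^{n-1}$, so since $X$ is a representative sample for $\proj(f_1,\dots,f_r)$ there is $\overline{x}\in X\cap\mathcal{R}$. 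By the observation above, $\mathcal{R}'\cap(\{\overline{x}\}\times\mathbb{R})$ is a single region of the one-dimensional sign-invariant decomposition of $f_1(\overline{x}),\dots,f_r(\overline{x})$; if it meets $\{\overline{x}\}\times\uint$, then $S_{\overline{x}}$ contains a point $x_n$ of it, and $(\overline{x},x_n)\in X'\cap\mathcal{R}'$, as required.

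The crux — and the step I expect to need the most care — is the last implication: that if $\mathcal{R}'$ meets $\uint^n$ at all, then the fiber of $\mathcal{R}'$ over \emph{our particular} sample point $\overline{x}$ still meets $\uint$. This is not automatic for an arbitrary region of $\mathcal{D}_{n-1}$: a real root function could in principle cross the level $x_n=1$ inside $\mathcal{R}$ (the level $x_n=0$ is already pinned down, since the constant coefficient of each $f_i$ with respect to $x_n$ equals $f_i(\cdot,0)$ and lies in $\proj_1$, hence has invariant sign on $\mathcal{R}$). The clean way to close this gap — which is legitimate here, and natural since we only care about solutions in $\uint^n$ anyway — is to work with a polynomial list that also contains the boundary polynomials $x_i$ and $1-x_i$ for every variable; these are preserved under $\proj$ (for $i<n$ the polynomials $x_i$ and $1-x_i$ reappear in $\proj_1$, while $x_n$ and $1-x_n$ contribute only constants), so the boundary of $\uint$ stays tracked all the way down the recursion. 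Once $x_n$ and $1-x_n$ are in the list, the stack over each region $\mathcal{R}$ has the constant sections $x_n\equiv 0$ and $x_n\equiv 1$, so on $\mathcal{R}$ every root function is confined to exactly one of the bands $(-\infty,0)$, $(0,1)$, $(1,\infty)$ (or is one of those two boundary sections). Consequently a region $\mathcal{R}'$ of $\mathcal{D}_n$ meets $\uint^n$ iff $\mathcal{R}$ meets $\uint^{n-1}$ and $\mathcal{R}'$ lies weakly between the $x_n=0$ and $x_n=1$ sections, and in that case its fiber over \emph{any} $\overline{x}\in\mathcal{R}\cap\uint^{n-1}$ meets $\uint$ — exactly what the argument needs. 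Assembling these observations completes the proof.
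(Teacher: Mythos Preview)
Your argument is structurally different from, and cleaner than, the paper's. The paper does not use the fiber-restriction observation directly; instead, for each region $\mathcal{R}'$ of $\mathcal{D}_n$ it picks an auxiliary test point $b\in\mathcal{R}'$, forms the two univariate families $F_a=\{f_j(a_i,\cdot)\}_j$ over the given sample point and $F_b$ over the first $n-1$ coordinates of $b$, and then argues via delineability and a case analysis on root multiplicities that some $x_n$ exists with $\operatorname{sign}(f_1,\dots,f_r,(a_i,x_n))=\operatorname{sign}(f_1,\dots,f_r,b)$. Your observation that the stack over $\mathcal{R}$ sliced at $\overline{x}$ \emph{is} the one-dimensional sign-invariant decomposition of $f_1(\overline{x}),\dots,f_r(\overline{x})$ packages the same delineability content without the auxiliary test points or the multiplicity bookkeeping.

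The boundary issue you flag is genuine, and it is a gap the paper's proof shares rather than avoids. Concretely, with $f(x_1,x_2)=x_2-2x_1$ one has $\proj(\{f\})=\{1,-2x_1\}$, so $X=\{0,0.8\}$ is a legitimate representative sample for $\proj(\{f\})$; the section $\{x_2=2x_1\}$ over $(0,\infty)$ meets $\uint^2$ (e.g.\ at $(0.4,0.8)$), yet over $\overline{x}=0.8$ it sits at $x_2=1.6\notin\uint$, so no representative sample $S_{0.8}$ is required to contain a point of it and the resulting $X'$ misses that region. The paper's argument only produces \emph{some} $x_n$ with the right sign vector, not one in $\uint$, nor one guaranteed to lie in the chosen $S_{\overline{x}}$. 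Your remedy of adjoining $x_i$ and $1-x_i$ to the list is the correct fix: it does not prove the lemma verbatim, but it proves exactly the version needed in Theorem~\ref{theo:cad}, where one is free to enlarge the input list before starting the recursion; indeed in the example $\psc_0(f,1-x_2)=1-2x_1$ lands in $\proj_3$ and splits the base at $x_1=\tfrac12$, after which the section over each refined base region stays on one side of $x_2=1$.
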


\begin{proof}
In the representative sample $X=\{a_1,\dots,a_q\}$, each $a_i$ is a set of points for each region of the projections, therefore a root or a point within the interval between two non-trivial roots. We select a set of test points $b = (b_1,\dots,b_p)$ for each region of the original polynomials.

We will construct the sample points of the regions of $\mathcal{D}_i$ which belong to the stack over the region $\mathcal{C} \subset \mathcal{D}_{i-1}$. The polynomials in $\proj^{n-1}(\F)$ can be evaluated over the sample point, resulting in a set of univariate polynomials in $x_i$. These univariate polynomials can now be treated the same as the projections, where we can isolate the roots and a representative sample.  

In order to extend this to $\mathbb{R}^n$, we can substitute $a_i$ into our original polynomial in order to achieve a set of $r$ univariate polynomials in $x_n$, which we will denote $F_a = \{f_1(a_i),\dots,f_n(a_i)\}$. We can also substitute in $\proj(b)$ for a second set of $r$ univariate polynomials, which we will denote $F_b = \{f_1(\proj(b)),\dots,f_n(\proj(b))\}$. This will allow us to compare two polynomials, one from each set $F_a, F_b$, in order to find a suitable point of $x_n$ which will result in the same $\operatorname{sign}$ vector. As our test points $b$ are already in $\mathbb{R}^n$, we already know $\operatorname{sign}(f_1,\dots,f_r,b)$ and hence know what sign we need the univariate polynomials, $F_a$ to be for the $\operatorname{sign}$ vectors to be equal. 

By Lemma \ref{lem:del}, the choice of the points $a_i$ makes sure the univariate polynomials $F_a$ have the same total number of complex roots, and the same number of distinct complex roots as our original polynomials \footnote{\cite{Jirstrand}'s proof to Theorem 4.1 excludes the zero polynomial, however, the argument still works.}. If the total number of complex roots is odd, the polynomial has at least one real root with an odd multiplicity. If the total number of complex roots is even, then there could be no real roots or real roots with even multiplicity. 

For an $F_b$ with all three states (positive, negative, and zero sign vector), we would need to confirm our $F_a$ can also take all three states. This can be achieved by checking the multiplicity of the roots. The polynomial has a root with an odd multiplicity iff it crosses the axis. We know what sign the original polynomial will give from $\operatorname{sign}(f_1,\dots,f_r,b)$ allowing us to find the condition needed on the point $x_n$ to give $F_a$ in order for $\operatorname{sign}(f_1,\dots,f_r,(a_i,x_n)) = \operatorname{sign}(f_1,\dots,f_r,b)$.

If it is the zero polynomial, we can select any point as our $x_n$ and the sign vectors would be equal.

If the univariate polynomials have an even multiplicity they would either have a sign vector $\{0,+\}$ or $\{0,-\}$. We would confirm what sign this need to be using $\operatorname{sign}(f_1,\dots,f_r,b)$, which will allow us to find the condition on the point $x_n$. A similar occurrence happens when the univariate polynomials are always positive or always negative, as $\operatorname{sign}(f_1,\dots,f_r,b)$ confirms what sign we require. 
\end{proof}

\begin{theorem}
\label{theo:cad}
There is a computable procedure that takes as input a finite list of $n$-variate real polynomials and outputs a finite list $(I_0,\ldots,I_\ell)$ of $\mathrm{AoUC}_{\uint^n}$-instances such that $$\{x \in \uint^n \mid \exists i \ I_i = \{x\}\}$$ is a representative sample for the polynomials.
\end{theorem}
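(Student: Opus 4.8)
The plan is to prove the statement by induction on the number of variables $n$, performing one projection step of cylindrical algebraic decomposition at a time. The key observation is that producing an $\aouc$-instance is a purely computable act — it is only the subsequent \emph{collapsing} of an instance that needs the oracle — so every non-computable operation (deciding whether a coefficient vanishes, selecting which candidates are roots) can be packaged into the instances and postponed; this is what keeps the procedure computable, and it mirrors the construction in the proof of Lemma~\ref{lem:Aouc}, except that here we return the $\aouc$-instances rather than invoking $\aouc^*$ on them. For the base case $n=0$ the space $\uint^0$ is a single point and the sign-invariant decomposition of $\mathbb{R}^0$ is $\{\{0\}\}$ (Definition~\ref{def:decomp}), so we output the one trivial $\mathrm{AoUC}_{\uint^0}$-instance that collapses to that point immediately. (One could equally take $n=1$ as the base case and run the univariate construction of the third paragraph directly, since $\proj$ of univariate polynomials produces only constants.)

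For the inductive step, given $n$-variate polynomials $\F=\{f_1,\dots,f_r\}$, first use Corollary~\ref{corr:proj} to compute the finite tuple $\proj(\F)$ of $(n{-}1)$-variate polynomials, and invoke the induction hypothesis to obtain a finite list $(I_0,\dots,I_m)$ of $\mathrm{AoUC}_{\uint^{n-1}}$-instances whose collapsed values form a representative sample $X$ for $\proj(\F)$. Each $I_j$, if it collapses at all, specifies a point $\overline x_j\in\uint^{n-1}$; treating $\overline x_j$ as an as-yet-undetermined input, substitute it into $f_1,\dots,f_r$ using the computable, specification-respecting arithmetic on potential univariate polynomials introduced above, producing potential univariate polynomials $g_1^{(j)},\dots,g_r^{(j)}$ in $x_n$ whose coefficient-instances are computed from $I_j$ and the coefficients of $\F$. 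Nothing non-computable happens yet: the collapse of an $\aouc$-instance is a semidecidable event, so the procedure may simply wait for it, and the arithmetic ensures that $g_\ell^{(j)}$ becomes determined as soon as $I_j$ does.

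Exactly as in the proofs of Proposition~\ref{prop4} and Lemma~\ref{lem:Aouc}: once $g_\ell^{(j)}$ is observed to be determined and non-zero, widen $\uint$ to an interval avoiding its roots and apply \cite[Theorem 4.1]{paulyleroux} to obtain a finite tuple of candidates containing all of its real roots. For the fibre over $\overline x_j$ let $S_{\overline x_j}$ consist of $0$, $1$, all these candidates (squeezed into $\uint$), and all pairwise midpoints of this finite set; one checks that $S_{\overline x_j}$ is a representative sample for the univariate list $g_1^{(j)},\dots,g_r^{(j)}$ — a polynomial identically zero imposes no subdivision, the candidates hit every root-section meeting $\uint$, and for every sector meeting $\uint$ one of $0$, $1$, or the midpoint of the squeezed images of its two bounding roots lies in both the sector and $\uint$. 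For each $x_n\in S_{\overline x_j}$ we emit one $\mathrm{AoUC}_{\uint^n}$-instance that outputs no information at all until both $I_j$ has collapsed and the subprocess producing $x_n$ has terminated, and only then collapses to $(\overline x_j,x_n)$; this is legitimate since an $\aouc$-instance is permitted to remain in the ``all'' state arbitrarily long. The output list is the finite collection of all these instances over all $j$ and $\ell$.

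By construction the collapsed values of the output list are precisely $\{(\overline x,x_n)\mid\overline x\in X,\ x_n\in S_{\overline x}\}$, so Lemma~\ref{lem:rep}, applied to the representative sample $X$ for $\proj(\F)$, yields that this set is a representative sample for $\F$, closing the induction. The step I expect to cost the most effort is the substitution bookkeeping: verifying that feeding an undetermined $\mathrm{AoUC}_{\uint^{n-1}}$-point into $f_\ell$ really produces a potential univariate polynomial that becomes determined exactly when the point does (so the later appeal to \cite{paulyleroux} is warranted), and that the sequential dependence of the $x_n$-coordinate on the collapse of $I_j$ can genuinely be absorbed into a single $\mathrm{AoUC}_{\uint^n}$-instance. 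Both follow from the two facts already in hand — potential-polynomial arithmetic respects the specification relation, and an $\aouc$-instance may withhold all output until it announces its unique answer — together with the routine finiteness count at each level and the interval-scaling device of Proposition~\ref{prop4}.
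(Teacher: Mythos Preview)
Your proposal is correct and follows essentially the same approach as the paper: induction on $n$, with the base case $\mathbb{R}^0$, and the inductive step going through Corollary~\ref{corr:proj} for the projections, the induction hypothesis for a representative sample downstairs, substitution to obtain potential univariate polynomials, Lemma~\ref{lem:Aouc} for their roots, midpoints for the sectors, and Lemma~\ref{lem:rep} to lift back up. Your write-up is more explicit than the paper's in a few places --- spelling out that $0$ and $1$ should be included in $S_{\overline x_j}$ to cover boundary sectors, and articulating how the product instance is formed by waiting for both coordinates --- but these are elaborations of the same argument rather than a different route.
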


\begin{proof}
The base case is trivial. The unique point $0 \in \mathbb{R}^0$ forms a representative sample for any collection of zero-variate polynomials. We can just output an $\aouc$-instance $\{0\} \in \mathcal{A}(\mathbb{R}^0)$.

Given a finite list of $n+1$-variate real polynomials, we can compute a finite list of $n$-variate polynomials including their projections using Corollary \ref{corr:proj}. By the induction hypothesis, we can compute finitely many $\mathrm{AoUC}_{\uint^n}$-instances such that the determined outputs form a representative sample for the projections. 

Following Lemma \ref{lem:rep} we then obtain the $\mathrm{AoUC}_{\uint^{n+1}}$-instances describing a representative sample for the original polynomials by monitoring the $\mathrm{AoUC}_{\uint^n}$-instances obtained from the projections. Whenever one them specifies a point, we can compute this point, substitute it into the original polynomials and then construct $\mathrm{AoUC}_{\uint^{n+1}}$-instances adding as final component the roots and intermediate values for the resulting univariate polynomials.

By considering the upper bounds on the ranks available to us, we can obtain some upper bound on the number of $\mathrm{AoUC}_{\uint^{n+1}}$-instances required in advance.

%We can substitute in all of those $\mathrm{AoUC}_{\uint^n}$-instances into the original $n+1$-variate polynomials to obtain a finite list of potential univariate polynomials. By Lemma \ref{lem:Aouc}, we can obtain $\aouc$-instances for roots of the polynomials, and add $\aouc$-instances for the midpoints of the intervals between roots. We can then combine the $\mathrm{AoUC}_{\uint^n}$ and the $\aouc$-instances to obtain the desired $\mathrm{AoUC}_{\uint^{n+1}}$-instances as per Lemma \ref{lem:rep}.
\end{proof}

\begin{corollary}
\label{corr:cadmain}
$\aouc^* \leqW \BMRoot \leqW \mathrm{BPIneq} \leqW \aouc^\diamond$.
\end{corollary}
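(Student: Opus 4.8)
The plan is to chain together four reductions, two of which we have essentially already built, and two of which are short arguments.

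For the lower bound $\aouc^* \leqW \BMRoot$: I would first observe that $\aouc \equivW \BRoot_{\leq 1}$ (cited from \cite{pauly-kihara2-mfcs}), and $\BRoot_{\leq 1} \leqW \BMRoot$ trivially since univariate polynomials are a special case of multivariate ones (reading $bx = a$ as a one-variable polynomial, with the root squeezed into the unit interval). This gives $\aouc \leqW \BMRoot$. Then Proposition \ref{prop:bmrootidempotent} gives $\BMRoot \equivW \BMRoot^*$, and since Weihrauch reducibility and the star operator are monotone, $\aouc^* \leqW \BMRoot^* \equivW \BMRoot$. (Alternatively one can cite $\aouc^* \equivW \mathrm{Nash}_2$ together with a reduction of $\mathrm{Nash}_2$ into root-finding, but the route through $\BMRoot^* \equivW \BMRoot$ is cleaner.)

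For the middle reduction $\BMRoot \leqW \mathrm{BPIneq}$: this is a direct reduction with no oracle juggling. Given a multivariate polynomial $P$, apply $\mathrm{MakeZero}$ from Lemma \ref{lemma:makezero} to obtain $g$ which provably has a zero in the unit hypercube, with $g = P$ if $P$ already did. Then feed the two-inequality system $\{g \geq 0, -g \geq 0\}$ (equivalently $g^2 \leq 0$, using the single inequality $-g^2 \geq 0$) to $\mathrm{BPIneq}$; any solution is precisely a zero of $g$ in $\uint^*$, hence a zero of $P$ whenever $P$ had one, and an arbitrary point of $\uint^*$ otherwise — which is exactly what $\BMRoot$ is allowed to return.

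For the upper bound $\mathrm{BPIneq} \leqW \aouc^\diamond$: this is where Theorem \ref{theo:cad} does the work, and it is the main obstacle conceptually (though the hard part has already been discharged by that theorem). Given polynomials $P_1,\dots,P_k$, run the computable procedure of Theorem \ref{theo:cad} to obtain a finite list $(I_0,\dots,I_\ell)$ of $\mathrm{AoUC}_{\uint^n}$-instances whose determined outputs form a representative sample. Each $\mathrm{AoUC}_{\uint^n}$-instance reduces to $\aouc$ (as noted in the excerpt, $\aouc \equivW \mathrm{AoUC}_{\uint^n}$), so applying $\aouc^*$ to all $\ell+1$ instances in parallel yields points $x_0,\dots,x_\ell$, with $x_i = $ the unique point of $I_i$ whenever $I_i$ is determined. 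Since the representative sample meets every region of the sign-invariant stack decomposition intersecting $\uint^n$, and the sign vector of $(P_1,\dots,P_k)$ is constant on each region, if the system $\forall i\ P_i(x) \geq 0$ is satisfiable in $\uint^n$ then some region witnessing this is met, hence some determined $x_i$ satisfies all the inequalities. We now need a \emph{second} round of oracle use: for each determined $x_i$ we can compute the sign vector $\operatorname{sign}(P_1,\dots,P_k,x_i)$ only as an element of $\{0,+,-\}^k$ with the usual semidecidability obstruction at zeros, but checking $P_j(x_i) \geq 0$ is co-c.e., so we can semidecide "$x_i$ violates some inequality"; we use a further application of $\aouc^*$ (or simply $\C_{\ell+1} \leqW \C_2^{\ell}$, which is below $\aouc^*$) to select an index $i$ that is not eventually flagged as violating, and output $x_i$. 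Both rounds together live in $\aouc^* \star \aouc^* \equivW \aouc^\diamond$ by the result of \cite{pauly-kihara2-mfcs} quoted in the introduction. The one point needing care is that the selection in the second round is among finitely many candidates with a co-c.e. "good" predicate and a guarantee that at least one is good — exactly an instance of finite closed choice $\C_{\ell+1}$, which is available; I would make sure the guarantee "at least one $x_i$ is good" holds precisely because the overall system was assumed satisfiable, and that in the unsatisfiable case returning any $x_i$ is permitted by the specification of $\mathrm{BPIneq}$.
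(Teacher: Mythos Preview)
Your proposal is correct and follows essentially the same three-step route as the paper: Proposition \ref{prop:bmrootidempotent} for the first reduction, the encoding of a root as the pair of inequalities $P \geq 0$, $-P \geq 0$ for the second, and Theorem \ref{theo:cad} followed by finite closed choice (the paper writes it as $\bigsqcup_n \C_n$, you as $\C_{\ell+1}$) composed with $\aouc^*$ for the third. Your invocation of $\mathrm{MakeZero}$ in the middle step is harmless but unnecessary, since $\mathrm{BPIneq}$ already permits arbitrary output on unsatisfiable systems, so feeding $P \geq 0$, $-P \geq 0$ directly (as the paper does) already matches the specification of $\BMRoot$ in both the root and no-root cases.
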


\begin{proof}
The first reduction is a consequence of Proposition \ref{prop:bmrootidempotent}.

Asking for a root of a polynomial $P$ is the same as asking for a solution of $P(\overline{x}) \geq 0 \wedge -P(\overline{x}) \geq 0$; this shows the second reduction.

For the third, we observe that if there is any solution to $\bigwedge_{i \leq k} P_i(\overline{x}) \geq 0$ within $\uint^n$, then every representative sample for $P_0,P_1,\ldots,P_k$ contains a solution. By Theorem~\ref{theo:cad}, $\aouc^*$ lets us obtain a representative sample. Non-solutions will eventually be recognized as such, which is why $\bigsqcup_{n \in \mathbb{N}} \C_n$ can identify a correct solution from finitely many candidates. As shown in \cite{pauly-kihara2-mfcs}, it holds that $\left ( \bigsqcup_{n \in \mathbb{N}} \C_n \right) \star \aouc^* \equivW \aouc^\diamond$.
\end{proof}

We are now prepared to prove our upper bound for the Weihrauch degree of finding Nash equilibria in multiplayer games:

\begin{corollary}
\label{corr:nashaoucdiamond}
$\Nash \leqW \aouc^\diamond$.
\begin{proof}
A strategy profile is a set of strategies for all players which fully specify all actions in a game. We say that a strategy profile is supported on a set $S$ of actions if every action outside of $S$ has probability $0$ in the strategy profile, and for every player the expected payoff for actions in $S$ is at least as much as for every other action. A strategy profile is a Nash equilibrium iff it is supported on some set $S$.

For a fixed set $S$ (for which there are only finitely many candidates), the property of being a strategy profile on it can be expressed as a multivariate polynomial system of inequalities. By compactness, we can detect if such a system has no solution in $\uint^n$. This means that $\left ( \bigsqcup_{n \in \mathbb{N}} \C_n \right)$ can be used to select a set $S$ with the property that some strategy profile is supported on it. We know from Nash's theorem that such a set $S$ must exist.

Once we have selected a suitable $S$, we invoke $\mathrm{BPIneq}$ to actually get a strategy profile supported on it. This is a Nash equilibrium, as desired. We thus get (taking into account Corollary \ref{corr:cadmain}): $$\Nash \leqW \mathrm{BPIneq} \star \left ( \bigsqcup_{n \in \mathbb{N}} \C_n \right) \leqW \aouc^\diamond$$
\end{proof}
\end{corollary}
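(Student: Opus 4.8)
The plan is to reduce $\Nash$ to the problem $\mathrm{BPIneq}$ of solving systems of polynomial inequalities, after a preprocessing step that guesses the support of the equilibrium. Recall that a strategy profile is \emph{supported} on a set $S$ of actions if every action outside $S$ receives probability $0$, and for every player the expected payoff obtained from actions in $S$ is at least the expected payoff of every other available action. By the standard characterisation of Nash equilibria, a strategy profile is a Nash equilibrium exactly when it is supported on some such $S$, and Nash's theorem guarantees that at least one set $S$ admitting a supported profile exists.

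First I would observe that for each fixed $S$ --- and there are only finitely many candidate sets, computable directly from the discrete description of the game --- the conditions ``is a strategy profile'' (probabilities in $[0,1]$ summing to one per player, zero outside $S$) and ``no player can profitably deviate'' are all expressible as non-strict polynomial inequalities in the real-valued entries of the profile. This gives, uniformly in $S$, an instance of $\mathrm{BPIneq}$ whose solutions in $\uint^n$ are precisely the Nash equilibria supported on $S$. Next I would use computable compactness of the unit hypercube: if the system associated to a given $S$ has no solution in $\uint^n$, this is eventually recognised, since the minimum over the compact hypercube of a suitable continuous penalty function is then positive and can be approximated from below. Hence $\bigsqcup_{n \in \mathbb{N}} \C_n$ can be fed the enumeration of those $S$ that are progressively ruled out and will output some $S$ for which the system is solvable; Nash's theorem ensures this $\bigsqcup_n \C_n$-instance is legitimate, i.e.\ not all $S$ are excluded.

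Finally, having selected a good $S$, I would call $\mathrm{BPIneq}$ on the corresponding system to obtain an actual supported profile, which is the desired Nash equilibrium. This yields
$$\Nash \leqW \mathrm{BPIneq} \star \left( \bigsqcup_{n \in \mathbb{N}} \C_n \right),$$
and by Corollary~\ref{corr:cadmain} together with the identity $\left( \bigsqcup_{n \in \mathbb{N}} \C_n \right) \star \aouc^* \equivW \aouc^\diamond$ from \cite{pauly-kihara2-mfcs}, the right-hand side is Weihrauch below $\aouc^\diamond$, which completes the argument.

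The main obstacle I anticipate is the bookkeeping of the second step: verifying that the support constraints and the no-deviation constraints genuinely assemble into a single polynomial inequality system with exactly the intended solution set (in particular that the payoff comparisons, which involve products of the probability variables, stay polynomial), and that compactness really makes unsolvability of a fixed system semi-decidable, so that the $\bigsqcup_n \C_n$-instance is well posed. Everything else is routine, given Corollary~\ref{corr:cadmain} and the compositional calculus of $\aouc$.
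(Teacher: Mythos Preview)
Your proposal is correct and follows essentially the same approach as the paper: guess the support $S$ using $\bigsqcup_n \C_n$ (exploiting compactness to semi-decide infeasibility), then call $\mathrm{BPIneq}$ on the resulting polynomial system, and conclude via Corollary~\ref{corr:cadmain}. The only minor quibble is that the specific identity you cite has $\bigsqcup_n \C_n$ applied \emph{after} $\aouc^*$, whereas here you need it \emph{before} $\mathrm{BPIneq}$; but since $\aouc^\diamond$ is closed under composition and $\bigsqcup_n \C_n \leqW \aouc^*$, the conclusion follows regardless.
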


\subsection{Differences Between our Algorithm and the Original}

We defined sections and sectors in Definition \ref{def:sec} as tuples of continuous real-valued function on the region $\mathcal{R}$. Therefore, the stack decomposition which consists of sections and sectors (Definition \ref{def:decomp}) is also constructed from continuous functions. We speak of an \emph{algebraic} stack decomposition, if these continuous functions are actually all polynomials.

In the original CAD algorithm the decomposition (Definition \ref{def:decomp}) has disjoint regions. Our algorithm, however, can get multiple copies of the same regions. 

\begin{definition}
    Let $\mathcal{R} \subseteq \mathbb{R}^n$. A \emph{weak} decomposition of $\mathcal{R}$ is a finite collection of regions $(R_i)_{i \in I}$ whose union is $\mathcal{R}$ subject to $\mathcal{R}_i \cap \mathcal{R}_j = \emptyset$  or $\mathcal{R}_i = \mathcal{R}_j$ for all $i,j \in I$.

    % Algebraic stack decomp
    %An algebraic stack decomposition of $\R^n$ is a finite collection of polynomials where each polynomial is a stack over $\R$ and $\mathcal{D}$ is an algebraic stack decomposition. 
    
    % Base case $\R^0$
    A weak algebraic stack decomposition of $\R^0$ is just a weak decomposition of $\R^0$. A weak algebraic stack decomposition of $\R^{n+1}$ is a weak decomposition of the form $\bigcup_{\mathcal{R}\in\mathcal{D}}S_{\mathcal{R}}$, where each $S_\mathcal{R}$ is an algebraic stack over $\mathcal{R}$ and $\mathcal{D}$ is a weak algebraic stack decomposition of $\R^n$.
\end{definition}

\begin{definition}
    A (weak) algebraic stack decomposition is minimal for a given set of polynomials $\F$ if each $p \in \F$ is delineable on it, but removing any polynomial from the (weak) algebraic stack decomposition breaks this property.
\end{definition}

If we could test for equality, CAD could produce a minimal algebraic stack decomposition. However, we cannot do this as we do not know the number of pieces/degrees/roots of the polynomials.

\begin{example}
    Consider $\F = \{x^2 + \epsilon\}$ for some parameter $\epsilon \in \R$. If $\epsilon$ is positive, a minimal algebraic stack decomposition for $\F$ uses $0$ polynomials, if $\epsilon = 0$ we need $1$, and if $\epsilon$ is negative we need $2$. Therefore, $\lpo$ reduces to finding minimal algebraic stack decompositions already in the simplest case.

    An algebraic stack decomposition (not weak) for $\mathcal{F}$ consists of a finite tuple of real numbers $(x_1,\dots,x_k)$ that are promised to be distinct and contain the roots of $x^2 + \epsilon$. This allows us to check if $\epsilon$ is zero or negative: if $\epsilon < \frac{1}{2}\min_{i,j \leq k \ i \neq j} |x_i - x_j|^2$, it already has to be the case that $\epsilon = 0$. So even just asking for a algebraic stack decomposition of a single univariate polynomial requires solving $\lpo$.
\end{example}

\begin{theorem}
    Given a finite set of multivariate polynomials, we can compute a weak algebraic stack decomposition such that the original polynomials are delineable over the stack.
\end{theorem}

\begin{proof}
    The basic idea of the CAD algorithm is that for a finite collection $\mathcal{F}$ of $n$-variate polynomials, we obtain a weak algebraic stack decomposition as $\proj^{n} \mathcal{F}, \proj^{n-1} \mathcal{F}, \dots, \proj \mathcal{F}$. Any finite overapproximation of the projections still works, thus Lemma \ref{lem:comp} yields the claim.
\end{proof}

\section{An open question and a remark}
Our main theorem demonstrates that $\aouc^* \leqW \mathrm{Nash} \leqW \aouc^\diamond$ which immediately raises the question of which of those reductions are strict. As previously mentioned, according to \cite{pauly-kihara2-mfcs}  it is established that $\aouc^* \leW \aouc^\diamond$, implying that at least one of the two reductions is strict. Since the degrees of $\aouc^*$ and $\aouc^\diamond$ exhibit significant similarities in various aspects, only a few of the established techniques are available to resolve this situation. While the use of the recursion theorem as demonstrated in \cite{pauly-kihara2-mfcs,pauly-kihara5-tamc} might be possible, it certainly presents a considerable challenge.

A similar question was left unresolved in \cite[Section 5]{pauly-kihara2-mfcs}. In that work, two variants of Gaussian elimination were defined as follows:

\begin{definition}[\cite{pauly-kihara2-mfcs}]
$\textrm{LU-Decomp}_{P,Q}$ takes as input a matrix $A$, and outputs permutation matrices $P$, $Q$, a matrix $U$ in upper echelon form and a matrix $L$ in lower echelon form with all diagonal elements being $1$ such that $PAQ = LU$. By $\textrm{LU-Decomp}_{Q}$ we denote the extension where $P$ is required to be the identity matrix.
\end{definition}

While $\textrm{LU-Decomp}_{P,Q} \equivW \aouc^*$ was demonstrated, for the other variant only $\aouc^* \leqW \textrm{LU-Decomp}_{Q} \leqW \aouc^\diamond$ could be established. A clearer understanding of situations where sequential uses of $\aouc$ are genuinely required to perform some ``algorithm'', and ideally a mathematical theorem or simpler problem which is equivalent to $\aouc^\diamond$ both seem to be very desirable.

Should it hold that $\aouc^* \leW \Nash$, it would be very interesting to see how many players are needed to render the Weihrauch degree of finding Nash equilibria harder than the two-player case. A natural conjecture would be that this already occurs for three players. An important distinction between two-player and three-player games is that two-player games with rational payoffs have rational Nash equilibria, while for every algebraic number $\alpha \in \uint$ there is a three-player game where every Nash equilibrium assigns $\alpha$ as a weight to a particular action, as shown by Bubelis \cite{bubelis}. However, the construction employed by Bubelis does not yield a reduction $\BRoot \leqW \Nash_3$, as it requires a polynomial with $\alpha$ as a simple root as starting point. On the other hand, we know that even $\BRoot \leqW \Nash_2$ holds via Corollary \ref{corr:broot}, albeit with a very roundabout construction. Thus, this particular difference between two and three-player games is immaterial to the Weihrauch degrees concerned.

\section{Consequences of the Classification}
\label{sec:consequences}
In this section, we shall explore some consequences of our classification of the Weihrauch degree of finding Nash equilibria. For this, we consider more permissive notions of algorithms and whether or not they are sufficiently powerful to solve the task. The first important point, however, is that the non-computability of finding Nash equilibria is inherently tied to the potential of having multiple Nash equilibria.

\begin{corollary}
\label{corr:unique}
Let $f : \mathbf{X} \to \mathbf{Y}$ be a function where $\mathbf{Y}$ is computably admissible. Then if $f \leqW \Nash$, then $f$ is already computable.
\begin{proof}
By combining Theorem \ref{theo:main} with \cite[Theorem 2.1]{paulyleroux} (originally \cite[Theorem 5.1]{paulybrattka}), since $\aouc^\diamond \leqW \C_\Cantor$.
\end{proof}
\end{corollary}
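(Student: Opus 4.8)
The plan is to deduce Corollary~\ref{corr:unique} entirely from facts already assembled in the excerpt, using the ``single-valued functions below a choice-type degree are computable'' phenomenon. The key observation is that finding Nash equilibria sits below $\aouc^\diamond$ (Theorem~\ref{theo:main}), and that $\aouc^\diamond$ in turn sits below closed choice on Cantor space $\C_\Cantor$. So first I would record the chain $f \leqW \Nash \leqW \aouc^\diamond \leqW \C_\Cantor$, where the last reduction is the one I need to justify briefly: $\aouc$ itself is a closed choice principle on $\Cantor$ (the all-or-unique instances are in particular closed subsets of $\Cantor$), and $\C_\Cantor$ is closed under composition and under the $\diamond$ operator, since a composition of two closed-choice-on-Cantor problems can be encoded as a single one (this is standard, e.g.\ via the fact that $\C_\Cantor \equivW \C_\Cantor \star \C_\Cantor$). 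Hence $\aouc^\diamond \leqW \C_\Cantor$, and by transitivity $f \leqW \C_\Cantor$.

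The second ingredient is the cited rigidity result, \cite[Theorem 2.1]{paulyleroux} (originally \cite[Theorem 5.1]{paulybrattka}): any \emph{single-valued} $f : \mathbf{X} \to \mathbf{Y}$ into a computably admissible space $\mathbf{Y}$ that is Weihrauch reducible to $\C_\Cantor$ is already computable. Intuitively, the reason is that a reduction to $\C_\Cantor$ produces, on a given input, a nonempty closed subset of $\Cantor$ any point of which yields (after post-processing) a correct output; since $f$ is single-valued and $\mathbf{Y}$ is admissible, all these outputs must converge to the same point of $\mathbf{Y}$, and one can compute that point by running the post-processor against the ``shrinking'' closed set and using admissibility to extract the limit — no genuine choice is ever needed. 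So once $f \leqW \C_\Cantor$ is in hand, applying this theorem with the hypothesis that $\mathbf{Y}$ is computably admissible immediately gives that $f$ is computable.

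I would therefore structure the proof as: (1) cite Theorem~\ref{theo:main} for $\Nash \leqW \aouc^\diamond$; (2) argue $\aouc^\diamond \leqW \C_\Cantor$; (3) chain with the hypothesis $f \leqW \Nash$ to get $f \leqW \C_\Cantor$; (4) invoke \cite[Theorem 2.1]{paulyleroux} using admissibility of $\mathbf{Y}$ to conclude computability of $f$. The main obstacle — really the only non-bookkeeping point — is step (2), establishing $\aouc^\diamond \leqW \C_\Cantor$; but this is genuinely routine given the standard facts that $\aouc \leqW \C_\Cantor$ and that $\C_\Cantor$ is a fixed point of the $\diamond$-operator (closed under composition), so the argument collapses to citing those and applying transitivity. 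Everything else is a one-line concatenation of reductions, which is exactly why the author's proof is only a sentence long.

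Concretely, the write-up would be essentially:

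\begin{proof}
If $f \leqW \Nash$, then by Theorem~\ref{theo:main} we have $f \leqW \aouc^\diamond$. Since $\aouc$ is a closed choice principle on $\Cantor$ and $\C_\Cantor$ is closed under composition, $\aouc^\diamond \leqW \C_\Cantor$, hence $f \leqW \C_\Cantor$. As $f$ is single-valued and $\mathbf{Y}$ is computably admissible, \cite[Theorem 2.1]{paulyleroux} (originally \cite[Theorem 5.1]{paulybrattka}) yields that $f$ is computable.
\end{proof}
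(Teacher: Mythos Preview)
Your proposal is correct and follows exactly the same approach as the paper's proof: chain $f \leqW \Nash \leqW \aouc^\diamond \leqW \C_\Cantor$ via Theorem~\ref{theo:main}, then apply \cite[Theorem 2.1]{paulyleroux} using single-valuedness and computable admissibility of $\mathbf{Y}$. The only difference is that you spell out the justification for $\aouc^\diamond \leqW \C_\Cantor$, which the paper simply asserts.
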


An immediate consequence of Corollary \ref{corr:unique} is that if we restrict our consideration to games having a unique Nash equilibrium, then computing the Nash equilibrium is possible. However, this insight goes even further. For example, we could consider the class of games where Player 1 receives the same payoff in any Nash equilibrium. Then computing the equilibrium payoff for Player 1 is possible, even if we might be unable to compute a Nash equilibrium.

As our first extended notion of algorithm, we consider computation with finitely many mindchanges. We begin with a model of computation where the machine continues to output more and more digits of the infinite code for the desired output. We then add the ability for the machine to completely erase all digits written so far, and to start over. To ensure that there is a well-defined output, this ability may be invoked only finitely many times. It was shown in \cite{paulybrattka,paulybrattka2} that a problem $f$ is solvable with finitely many mindchanges iff $f \leqW \C_\mathbb{N}$.

\begin{corollary}
$\Nash$ is solvable with finitely many mindchanges.
\end{corollary}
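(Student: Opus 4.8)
The plan is to reduce the claim to the characterization $f \leqW \C_\mathbb{N}$ of problems solvable with finitely many mindchanges, which is stated in the excerpt. So it suffices to show $\Nash \leqW \C_\mathbb{N}$. By Theorem \ref{theo:main} we have $\Nash \leqW \aouc^\diamond$, so it is enough to show $\aouc^\diamond \leqW \C_\mathbb{N}$, or equivalently $\aouc^\diamond \equivW \C_\mathbb{N}$ (the lower bound is not needed, only $\leqW$). Actually the cleanest route is to invoke the fact, already used in the proof of Corollary \ref{corr:unique}, that $\aouc^\diamond \leqW \C_\Cantor$ is too weak by itself; instead I want $\aouc^\diamond \leqW \C_\mathbb{N}$, which should follow because $\aouc \leqW \C_\mathbb{N}$ and $\C_\mathbb{N}$ is closed under the relevant operations.

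Concretely, the first step is to observe that $\aouc \leqW \C_\mathbb{N}$: an $\aouc$-instance is either ``all of $\Cantor$'' or collapses to a single point, and a machine with finitely many mindchanges can output any point (say $0^\mathbb{N}$) while the instance is still undetermined, and erase and rewrite the announced point the moment the instance collapses — that is exactly one mindchange. (Formally this is the standard fact $\aouc \leqW \C_\mathbb{N}$; alternatively $\aouc \leqW \C_2^{(1)}$ etc.) The second step is that $\C_\mathbb{N}$ is closed under the operations building $\diamond$: $\C_\mathbb{N}$ is closed under composition and under finite parallelization (indeed $\C_\mathbb{N} \equivW \C_\mathbb{N}^\diamond$, since a bounded number of sequential mindchange computations is again a mindchange computation — the total number of erasures is still finite). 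Hence $\aouc^\diamond \leqW \C_\mathbb{N}^\diamond \equivW \C_\mathbb{N}$. Chaining with $\Nash \leqW \aouc^\diamond$ gives $\Nash \leqW \C_\mathbb{N}$, and the cited equivalence \cite{paulybrattka,paulybrattka2} between $\leqW \C_\mathbb{N}$ and solvability with finitely many mindchanges finishes the proof.

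The main obstacle, such as it is, is just to be sure that $\C_\mathbb{N}^\diamond \equivW \C_\mathbb{N}$, i.e.\ that iterating mindchange computations (with later queries depending on earlier answers) does not escape the mindchange world. This is true because each invocation contributes only finitely many mindchanges and only finitely many invocations occur, so the composite still stabilizes; this is a known closure property of $\C_\mathbb{N}$ and can be cited rather than reproved. Everything else is bookkeeping: the proof is essentially ``$\Nash \leqW \aouc^\diamond \leqW \C_\mathbb{N}$, now apply the mindchange characterization.''

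\begin{proof}
By Theorem \ref{theo:main} we have $\Nash \leqW \aouc^\diamond$. Now $\aouc \leqW \C_\mathbb{N}$: given an $\aouc$-instance we output $0^\mathbb{N}$ as long as the instance has not collapsed, and rewrite the announced point once (and only once) when the unique solution is revealed, so one mindchange suffices; equivalently this is the standard reduction of $\aouc$ to finite-mindchange computation, i.e.\ to $\C_\mathbb{N}$. Since $\C_\mathbb{N}$ is closed under composition and finite parallelization — a bounded number of sequential mindchange computations still makes only finitely many mindchanges overall — we have $\C_\mathbb{N}^\diamond \equivW \C_\mathbb{N}$. Therefore $\aouc^\diamond \leqW \C_\mathbb{N}^\diamond \equivW \C_\mathbb{N}$, and hence $\Nash \leqW \C_\mathbb{N}$. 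By the characterization of \cite{paulybrattka,paulybrattka2}, a problem $f$ is solvable with finitely many mindchanges iff $f \leqW \C_\mathbb{N}$, so $\Nash$ is solvable with finitely many mindchanges.
\end{proof}
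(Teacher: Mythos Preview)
Your proof is correct and is precisely the argument the paper leaves implicit: the corollary is stated without proof, relying on the preceding sentence that solvability with finitely many mindchanges is equivalent to $\leqW \C_\mathbb{N}$, together with Theorem~\ref{theo:main}. Your route $\Nash \leqW \aouc^\diamond \leqW \C_\mathbb{N}^\diamond \equivW \C_\mathbb{N}$ is exactly the intended one; the only minor quibble is that your informal justification of $\C_\mathbb{N}^\diamond \equivW \C_\mathbb{N}$ via mindchanges is slightly circular in phrasing, but the fact itself is standard (idempotence of $\C_\mathbb{N}$ under $\star$ and $^*$) and can simply be cited.
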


We can delve a bit deeper and obtain an upper bound for the number of mindchanges required from the dimensions of the game.

Next, we consider various probabilistic models of computation. A Las Vegas machine can use random coin flips to help with its computation. At any point during the computation, it can report a fault and abort, but if it continues running forever, it needs to produce a valid output. For each input, the probability (based on the coin flips) of outputting a correct output needs to be positive (but we do not demand a global positive lower bound). This model was introduced in \cite{hoelzl}. Since Las Vegas computability is closed under composition, we obtain the following strengthening to their \cite[Corollary 17.3]{hoelzl} (by using their \cite[Corollary 16.4]{hoelzl}):

\begin{corollary}
$\Nash$ is Las Vegas computable.
\end{corollary}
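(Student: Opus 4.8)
The plan is to deduce the Las Vegas computability of $\Nash$ purely from the classification $\Nash \leqW \aouc^\diamond$ (Theorem \ref{theo:main}), together with the two cited facts from \cite{hoelzl} that the excerpt explicitly points us towards. First I would recall that \cite[Corollary 16.4]{hoelzl} establishes that $\aouc$ itself is Las Vegas computable: a Las Vegas machine, faced with an $\aouc$-instance, can guess a point of $\Cantor$ by coin flips and keep outputting it, aborting only in the event that the instance collapses to a unique point different from the guess; the probability that the guess is the eventually-revealed unique point is positive (indeed $2^{-n}$ for the relevant prefix length), and in the \emph{all}-case any guess is correct, so a correct output is produced with positive probability. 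This handles a single oracle call.

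The key structural step is to lift this from $\aouc$ to $\aouc^\diamond$. Here I would invoke that Las Vegas computability is closed under composition — this is the point the excerpt makes explicitly (``Since Las Vegas computability is closed under composition'') and it is exactly what \cite[Corollary 16.4]{hoelzl} together with the composition-closure gives. Since $\aouc^\diamond$ is, by definition (and by \cite{westrick}), the least Weihrauch degree above $\aouc$ that is closed under composition, and since the class of Las Vegas computable problems is closed under composition and closed downwards under Weihrauch reducibility (composing a Las Vegas machine with computable pre- and post-processing, and with a computable choice of which name to feed forward, stays Las Vegas), it follows that every problem in $\aouc^\diamond$ — in particular every finite sequential iteration of $\aouc$-calls interleaved with computation — is Las Vegas computable. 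Concretely: a run uses coin flips to resolve the first $\aouc$-call, computes the next instance, resolves it, and so on; the finitely many calls each succeed with positive probability, and the product of finitely many positive probabilities is positive, so the overall success probability is positive. (One must be slightly careful that the number of calls is not fixed in advance in $f^\diamond$, but the closure-under-composition argument in \cite{hoelzl} already accounts for this via their diamond-closure result, which is why \cite[Corollary 16.4]{hoelzl} is the right citation rather than a bare single-call statement.)

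Finally I would conclude: by Theorem \ref{theo:main} (equivalently Corollary \ref{corr:nashaoucdiamond}), $\Nash \leqW \aouc^\diamond$, and since $\aouc^\diamond$ is Las Vegas computable and Las Vegas computability is inherited downwards along $\leqW$, $\Nash$ is Las Vegas computable. I would then note, as the excerpt does, that this strengthens \cite[Corollary 17.3]{hoelzl}, which presumably only asserted this for $\Nash_2$ (via $\Nash_2 \equivW \aouc^*$ and $\aouc^* \leqW \aouc^\diamond$), to the full multiplayer case.

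The main obstacle — really the only non-formal point — is making sure the closure-under-composition step genuinely covers $\diamond$ and not merely $\ast$: the number of oracle calls in an $\aouc^\diamond$-computation is not bounded a priori, so one needs the version of the \cite{hoelzl} results that already handles unbounded (but almost surely finite) sequential iteration, and one should confirm that the ``positive success probability'' condition is preserved under such iteration rather than degrading to zero in the limit. Given that the excerpt signals precisely which corollaries of \cite{hoelzl} to use, and that \cite{westrick}'s characterization of $\diamond$ as the composition-closure is available, this obstacle is handled by citation rather than by fresh argument, so the proof is essentially a one-line deduction once the ingredients are lined up.
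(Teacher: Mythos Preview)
Your proposal is correct and follows essentially the same route as the paper: combine $\Nash \leqW \aouc^\diamond$ from Theorem~\ref{theo:main} with the Las Vegas computability of $\aouc$ from \cite[Corollary 16.4]{hoelzl} and the closure of Las Vegas computability under composition to conclude. The paper's argument is precisely this one-line deduction, and your discussion of the $\diamond$-versus-$\ast$ issue correctly identifies the point handled by the composition-closure citation.
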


It was previously demonstrated in \cite[Theorem 16.6]{hoelzl} that even for a Las Vegas computation solving just $\aouc$ it is not possible to compute a positive lower bound for the success chance from the input -- so in particular, there is no global lower bound.

By dropping the requirement that a wrong guess must be reported at some stage of the computation, we arrive at Monte Carlo machines. They, too, make random coin tosses and are subject to the requirement that any completed output must be correct and that a correct output needs to be given with some positive probability. However, they can fail by simply stopping to produce output\footnote{Due to the Halting problem, we cannot detect whether they have done that.}. This model was introduced in \cite{hoelzl2}, and from the characterizations obtained there together with our classification it follows that:

\begin{corollary}
$\Nash$ is Monte Carlo computable, and moreover, we can compute a positive lower bound for the success chance from the dimensions of the game.
\end{corollary}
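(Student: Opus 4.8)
The plan is to derive this corollary from our classification $\Nash \leqW \aouc^\diamond$ together with the existing characterization of Monte Carlo computability in \cite{hoelzl2}. First I would recall the relevant benchmark: the results of \cite{hoelzl2} identify a canonical Weihrauch degree (built from positive-probability closed choice, essentially $\C_\Cantor$ restricted to trees of positive measure, iterated appropriately) that exactly captures Monte Carlo computability, and in particular they establish that $\aouc$ — or, more to the point, the finite-choice principles $\C_n$ and $\aouc$ together with the $\bigsqcup$ and $\diamond$ operators needed to build $\aouc^\diamond$ — lie below this degree. Since Monte Carlo computability is closed under composition (the Monte Carlo degree is closed under $\star$, hence under $\diamond$), and since $\C_n$ is trivially Monte Carlo computable with success probability $1/n$ bounded away from $0$ by a function of $n$, it follows that $\aouc^\diamond$ is Monte Carlo computable; by Theorem \ref{theo:main} (upper bound) so is $\Nash$.

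The second, more delicate, half is the quantitative claim that we can \emph{compute a positive lower bound on the success chance from the dimensions of the game}. Here I would not go through $\aouc^\diamond$ as a black box, since the $\diamond$ operator a priori involves an unbounded number of oracle calls and the success probability of a naive Monte Carlo simulation could degrade to zero. Instead I would revisit the explicit reduction behind Corollary \ref{corr:nashaoucdiamond}: from a game $G$ we first use $\bigsqcup_{n} \C_n$ to pick a support set $S$ (finitely many candidates, a number bounded by the dimensions of $G$), and then invoke $\mathrm{BPIneq}$, which by Corollary \ref{corr:cadmain} and the proof of Theorem \ref{theo:cad} is handled by an $\aouc^*$-computation (a representative sample via CAD) followed by a single $\bigsqcup_n \C_n$ to select a genuine solution. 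The key observation is that the \emph{number} of $\aouc$-instances produced by the CAD procedure, and the sizes of the $\C_n$-instances, are all bounded by a computable function of the number of players and actions — because the degrees of the polynomials involved, and hence the degree bounds feeding into \cite[Theorem 4.1]{paulyleroux} and into the projection operator, are determined by the game's dimensions. So the whole computation reduces, with a uniformly bounded structure, to finitely many parallel copies of $\C_m$ for an $m$ computable from the dimensions, and a Monte Carlo machine can guess all of these correctly with probability at least $m^{-N}$ where $N$ is likewise bounded; taking logarithms yields the desired computable positive lower bound.

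The main obstacle I anticipate is bookkeeping the dependency of all these parameters — the number of candidate supports, the number of CAD projection steps (which is the number of variables, i.e. again a dimension count), the degree growth under projection, and the arity of each finite-choice call — on the combinatorial data of the game, and checking that at no point does an \emph{a priori unbounded} $\bigsqcup_{n\in\mathbb{N}}\C_n$ (genuinely unbounded arity) actually occur: we must argue that once the dimensions are fixed, all the relevant $\C_n$'s have $n$ bounded by a computable function of those dimensions, so that the "$\bigsqcup$" is in truth a finite join of bounded size. Given that, the corollary follows by plugging into the Monte Carlo characterization of \cite{hoelzl2}. I would also remark that this simultaneously re-proves the preceding two corollaries (Las Vegas computability, and solvability with finitely many mindchanges, i.e. $\Nash \leqW \C_\mathbb{N}$) with explicit bounds, since $\C_\mathbb{N}$ and the Las Vegas degree both sit above the finite-arity content we have isolated.
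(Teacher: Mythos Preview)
Your approach is sound and matches what the paper gestures at (the paper merely cites the characterizations in \cite{hoelzl2} together with the classification and leaves all details implicit, so your unwinding of the reduction to count the oracle calls is exactly the kind of bookkeeping the paper suppresses). One small slip worth fixing: the CAD step in Theorem~\ref{theo:cad} produces genuine $\aouc$-instances with \emph{continuous} output, not discrete $\C_m$-instances, so the claim that ``the whole computation reduces \ldots\ to finitely many parallel copies of $\C_m$'' overstates things --- you cannot absorb the $\aouc$-calls into finite choice. The repair is immediate, though: $\aouc$ itself is Monte Carlo computable with success probability at least $\tfrac{1}{2}$ (flip a coin to decide whether to commit to an arbitrary point now or to wait for the instance to collapse to its singleton), and since the \emph{number} of $\aouc$-instances is bounded by a computable function of the dimensions just as you argue for the $\C_n$-calls, your probability bound goes through with an extra factor of $2^{-k}$.
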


Indeed, in the context of probabilistic computation for finding Nash equilibria, we are confronted with a choice. We can opt for either possessing knowledge of a lower bound on the success probability or being able to detect when a random guess during the computation proves to be incorrect. This choice highlights a trade-off between the two aspects within the confines of probabilistic algorithms aimed at solving this problem.
\bibliographystyle{eptcs}
\bibliography{references}

\end{document}